\newtheorem{theorem}{Theorem}[section]
\newtheorem{corollary}{Corollary}[theorem]
\newtheorem{lemma}[theorem]{Lemma}
\title{Folding Every Point on a Polygon Boundary to a Point}
\author{
  Nattawut Phetmak\thanks{Email: nattawut.p@ku.th. Department of Computer Engineering, Kasetsart University, Bangkok, Thailand.}
  \and
  Jittat Fakcharoenphol\thanks{Email: jittat@gmail.com. Department of Computer Engineering, Kasetsart University, Bangkok, Thailand. Supported by the Thailand Research Fund, Grant RSA-6180074.}
}
\date{}
\begin{document}

\maketitle

\begin{abstract}
  We consider a problem in computational origami. 
Given a piece of paper as a convex polygon $P$ and a point $f$ located within,
fold every point on a boundary of $P$ to $f$ and compute a region that is safe from folding, i.e., the region with no creases.
This problem is an extended version of a problem by Akitaya, Ballinger, Demaine, Hull, and Schmidt~[CCCG'21] that only folds corners of the polygon.
To find the region, we prove structural properties of intersections of parabola-bounded regions
and use them to devise a linear-time algorithm.
We also prove a structural result regarding the complexity of the safe region as a variable of the location of point $f$,
i.e., the number of arcs of the safe region can be determined using the straight skeleton of the polygon $P$.
\end{abstract}


\section{Introduction}
\label{sect:intro}

Paper folding offers rich computational geometry problems with many real-world applications~\cite{Lang-comp-origami-2009}.
The topic, typically referred to as {\em computational origami} or {\em mathematics of paper folding}~\cite{DO08book,hull2020origametry},
studies both feasibility problems and also structural problems~\cite{AkitayaBDH021, AkitayaDK17-simple-folding-hard,AbelDDELU-flat-folding}
with the aim to illuminate the connections between physical structures/problems and mathematical geometric objects (see, e.g.,~\cite{Dambrogio-unlocking-21, FeltonTDRW-2014-self-folding}).  
As geometric construction using straightedge and compass offers elegant connections between algebra and geometry,
paper folding, which can be seen as geometric construction with additional operations, may provide beautiful structural properties worth studying
(e.g., see \cite{DO08book,hull2011solving}).

\begin{figure}
  \centering
  \begin{subfigure}[t]{0.45\textwidth}
    \centering
    \begin{tikzpicture}
      \draw[semithick, olive]  (0.989,2.744) -- (0.733,1.678);
      \draw[semithick, red]    (0.851,2.638) -- (1.230,1.278);
      \draw[semithick, teal]   (0.913,1.532) -- (5.369,1.169);
      \draw[semithick, orange] (1.423,1.121) -- (4.123,2.401);
      \draw[semithick, violet] (1.764,0.847) -- (4.039,2.442);
      \draw[stealth-stealth, densely dashed, olive]  (0.180,2.124) .. controls (0.592,2.126) and (1.139,2.069) .. (1.402,1.864);
      \draw[stealth-stealth, densely dashed, red]    (0.776,1.643) .. controls (0.893,1.793) and (1.126,1.843) .. (1.364,1.820);
      \draw[stealth-stealth, densely dashed, teal]   (1.374,1.162) -- (1.408,1.770);
      \draw[stealth-stealth, densely dashed, orange] (1.972,0.679) .. controls (1.955,0.991) and (1.742,1.540) .. (1.474,1.778);
      \draw[stealth-stealth, densely dashed, violet] (2.568,0.198) .. controls (2.560,0.658) and (1.919,1.752) .. (1.493,1.818);
      \draw[thick] (1.932,3.468) -- (0.180,2.124) -- (2.568,0.198) -- (4.734,0.534) -- (5.790,1.590) -- cycle;
      \fill (1.428,1.824) circle (1.5pt) node[anchor=south west] {$\scriptstyle f$};
      \node at (3.7,2.9) {$P$};
      \node[anchor=west]  at (1.932,3.468) {$\scriptstyle v_1$};
      \node[anchor=north] at (0.180,2.124) {$\scriptstyle v_2$};
      \node[anchor=east]  at (2.568,0.198) {$\scriptstyle v_3$};
      \node[anchor=west]  at (4.734,0.534) {$\scriptstyle v_4$};
      \node[anchor=south] at (5.790,1.590) {$\scriptstyle v_5$};
    \end{tikzpicture}
    \caption{Sample creases from one edge $\overline{v_2v_3}$}
  \end{subfigure}
  \hfill
  \begin{subfigure}[t]{0.45\textwidth}
    \centering
    \begin{tikzpicture}
      \fill[blue!20]
        (2.035,2.449) .. controls (1.586,2.476) and (1.244,2.364) ..
        (1.007,2.114) .. controls (1.108,1.420) and (1.663,1.288) ..
        (2.672,1.718) .. controls (2.495,2.033) and (2.281,2.275) .. cycle;
      \node[blue] at (2.3,1.9) {$R$};
      \draw[densely dotted, black] (5.634,1.434) -- (1.094,2.826); 
      \draw[densely dotted, black] (4.876,2.035) -- (0.848,2.636);
      \draw[densely dotted, black] (3.756,2.580) -- (0.490,2.362);
      \draw[densely dotted, black] (2.960,2.968) -- (0.365,1.974);
      \draw[densely dotted, black] (2.316,3.281) -- (0.617,1.772);
      \draw[densely dotted, black] (1.544,3.170) -- (0.728,1.681);
      \draw[densely dotted, black] (1.010,2.760) -- (0.932,1.517);
      \draw[densely dotted, black] (0.420,2.308) -- (2.887,0.247);
      \draw[densely dotted, black] (1.223,1.283) -- (4.564,2.188);
      \draw[densely dotted, black] (1.598,0.979) -- (4.021,2.452);
      \draw[densely dotted, black] (1.879,0.754) -- (3.779,2.569);
      \draw[densely dotted, black] (2.015,0.644) -- (3.636,2.639);
      \draw[densely dotted, black] (2.165,0.523) -- (3.572,2.670);
      \draw[densely dotted, black] (2.324,0.395) -- (3.562,2.675);
      \draw[densely dotted, black] (2.492,0.259) -- (3.590,2.660);
      \draw[densely dotted, black] (2.707,0.220) -- (3.648,2.633); 
      \draw[densely dotted, black] (2.909,0.251) -- (3.600,2.657);
      \draw[densely dotted, black] (3.114,0.283) -- (3.587,2.663);
      \draw[densely dotted, black] (3.324,0.316) -- (3.608,2.652);
      \draw[densely dotted, black] (3.536,0.348) -- (3.659,2.628); 
      \draw[densely dotted, black] (3.398,0.326) -- (3.427,2.741);
      \draw[densely dotted, black] (3.286,0.310) -- (3.185,2.858);
      \draw[densely dotted, black] (3.204,0.296) -- (2.928,2.983);
      \draw[densely dotted, black] (3.166,0.290) -- (2.647,3.120);
      \draw[densely dotted, black] (3.190,0.294) -- (2.326,3.276);
      \draw[densely dotted, black] (3.306,0.312) -- (1.930,3.467);
      \draw[densely dotted, black] (3.566,0.353) -- (1.672,3.268);
      \draw[densely dotted, black] (4.080,0.432) -- (1.439,3.089);
      \draw[densely dotted, black] (5.087,0.589) -- (1.241,2.938);
      \draw[semithick, olive]  (0.989,2.744) -- (0.733,1.678);
      \draw[semithick, red]    (0.851,2.638) -- (1.230,1.278);
      \draw[semithick, teal]   (0.913,1.532) -- (5.369,1.169);
      \draw[semithick, orange] (1.423,1.121) -- (4.123,2.401);
      \draw[semithick, violet] (1.764,0.847) -- (4.039,2.442);
      \draw[thick] (1.932,3.468) -- (0.180,2.124) -- (2.568,0.198) -- (4.734,0.534) -- (5.790,1.590) -- cycle;
      \fill (1.428,1.824) circle (1.5pt) node[anchor=south west] {$\scriptstyle f$};
      \node at (3.7,2.9) {$P$};
      \node[anchor=west]  at (1.932,3.468) {$\scriptstyle v_1$};
      \node[anchor=north] at (0.180,2.124) {$\scriptstyle v_2$};
      \node[anchor=east]  at (2.568,0.198) {$\scriptstyle v_3$};
      \node[anchor=west]  at (4.734,0.534) {$\scriptstyle v_4$};
      \node[anchor=south] at (5.790,1.590) {$\scriptstyle v_5$};
    \end{tikzpicture}
    \caption{The safe region $R$}
  \end{subfigure}
  \caption{The problem of folding every point on the boundary of a polygon $P$ to a point $f$}
  \label{fig:hook}
\end{figure}
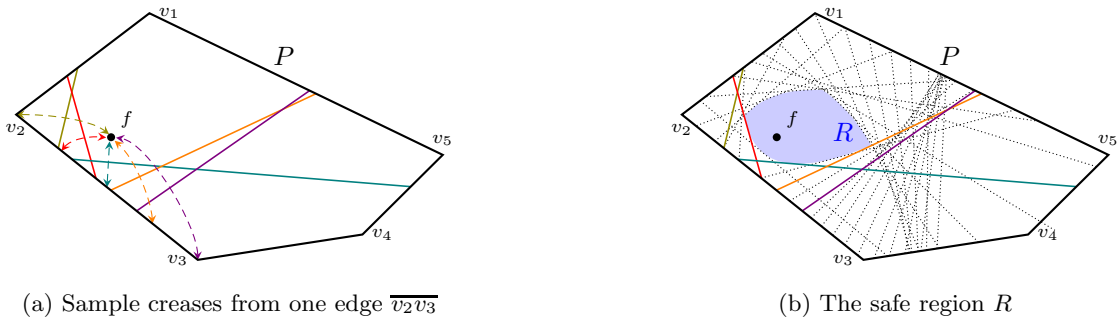

Akitaya, Ballinger, Demaine, Hull, and Schmidt~\cite{AkitayaBDH021} previously considered two folding problems on a convex piece of paper $P$.
Given a query point $p$ inside $P$, their first problem, originally proposed by Haga~\cite{haga1994proposal,haga2008origamics,hull-project-origami}, called {\em points to a point}, is to find a region containing $p$ bounded by creases after fold-and-unfold each corner of $P$ onto $p$.
Their second problem, called {\em lines to a line}, takes as an input a line $\ell$ inside $P$ and the goal is to find a region containing $\ell$ bounded by creases after fold-and-unfold each side of $P$ onto $\ell$.
Although these two problems share a similar structure, they find that the outcomes diverge.
That is, the region in the first problem resembles a Voronoi cell~\cite{voronoi-book-2013} with the inner point and corners as seeds,
while the region in the second problem relies on the straight skeleton of the piece of paper.

We consider a variance of this folding problem in the same flavor, i.e., we are given a query point $f$ inside $P$ and we are interested in a region containing $f$ bounded by creases after fold-and-unfold {\em every} point on the boundary of the paper $\delta P$ onto $f$.
We call this result region a {\em safe region} $R$, since every point $p \in R$ is safe from this folding procedure.  
Our contributions are an analysis of the shape of $R$, an efficient algorithm for finding $R$, and a complexity of the safe region with respect to any query point $f$.

\begin{figure}
  \centering
  \begin{subfigure}[t]{0.45\textwidth}
    \centering
    \begin{tikzpicture}
      \draw[densely dotted, black] (1.094,2.826) -- (5.634,1.434);
      \draw[densely dotted, black] (0.989,2.745) -- (0.733,1.678);
      \draw[densely dotted, black] (1.764,0.847) -- (4.039,2.442);
      \draw[densely dotted, black] (2.707,0.220) -- (3.648,2.633);
      \draw[densely dotted, black] (3.536,0.348) -- (3.658,2.628);
      \filldraw[blue!20, draw=blue, thick] (0.989,2.745) -- (0.733,1.678) -- (1.764,0.847) -- (3.399,1.993) -- (3.443,2.106) -- (1.094,2.826) -- cycle;
      \fill[blue] (0.989,2.745) circle (1pt);
      \fill[blue] (0.733,1.678) circle (1pt);
      \fill[blue] (1.764,0.847) circle (1pt);
      \fill[blue] (3.399,1.993) circle (1pt);
      \fill[blue] (3.443,2.106) circle (1pt);
      \fill[blue] (1.094,2.826) circle (1pt);
      \draw[thick] (1.932,3.468) -- (0.180,2.124) -- (2.568,0.198) -- (4.734,0.534) -- (5.790,1.590) -- cycle;
      \fill (1.428,1.824) circle (1.5pt) node[anchor=south west] {$\scriptstyle f$};
      \node at (3.7,2.9) {$P$};
      \node[anchor=west]  at (1.932,3.468) {$\scriptstyle v_1$};
      \node[anchor=north] at (0.180,2.124) {$\scriptstyle v_2$};
      \node[anchor=east]  at (2.568,0.198) {$\scriptstyle v_3$};
      \node[anchor=west]  at (4.734,0.534) {$\scriptstyle v_4$};
      \node[anchor=south] at (5.790,1.590) {$\scriptstyle v_5$};
    \end{tikzpicture}
    \caption{Corner folding in \cite{AkitayaBDH021}}
  \end{subfigure}
  \hfill
  \begin{subfigure}[t]{0.45\textwidth}
    \centering
    \begin{tikzpicture}
      \draw[densely dotted, black] (0.743,1.670) .. controls (1.181,2.880) and (2.697,2.687) .. (5.290,1.090);
      \draw[densely dotted, black] (1.020,2.768) .. controls (0.800,1.151) and (1.799,1.046) .. (4.018,2.453);
      \draw[densely dotted, black] (1.629,0.955) .. controls (2.341,1.092) and (2.984,1.666) .. (3.560,2.675);
      \draw[densely dotted, black] (2.291,0.421) .. controls (2.950,0.941) and (3.382,1.688) .. (3.587,2.662);
      \draw[densely dotted, black] (1.020,2.768) .. controls (2.125,2.743) and (2.840,1.918) .. (3.165,0.291);
      \filldraw[blue!20, draw=blue, thick]
        (2.035,2.449) .. controls (1.586,2.476) and (1.244,2.364) ..
        (1.007,2.114) .. controls (1.108,1.420) and (1.663,1.288) ..
        (2.672,1.718) .. controls (2.495,2.033) and (2.281,2.275) .. cycle;
      \fill[blue] (2.035,2.449) circle (1pt);
      \fill[blue] (1.007,2.114) circle (1pt);
      \fill[blue] (2.672,1.718) circle (1pt);
      \draw[thick] (1.932,3.468) -- (0.180,2.124) -- (2.568,0.198) -- (4.734,0.534) -- (5.790,1.590) -- cycle;
      \fill (1.428,1.824) circle (1.5pt) node[anchor=south west] {$\scriptstyle f$};
      \node at (3.7,2.9) {$P$};
      \node[anchor=west]  at (1.932,3.468) {$\scriptstyle v_1$};
      \node[anchor=north] at (0.180,2.124) {$\scriptstyle v_2$};
      \node[anchor=east]  at (2.568,0.198) {$\scriptstyle v_3$};
      \node[anchor=west]  at (4.734,0.534) {$\scriptstyle v_4$};
      \node[anchor=south] at (5.790,1.590) {$\scriptstyle v_5$};
    \end{tikzpicture}
    \caption{Boundary folding in our paper}
  \end{subfigure}
  \caption{Comparison of the setting of the problem}
  \label{fig:compare}
\end{figure}
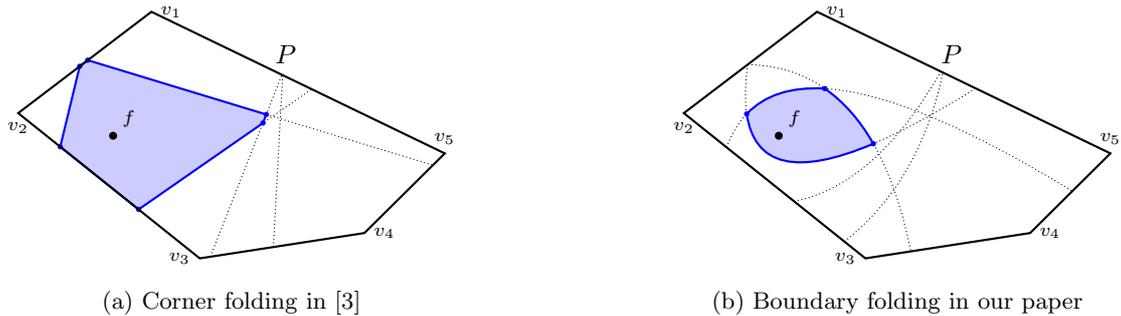

We first show that, although we fold infinitely many points on $\delta P$, the result region $R$ can be described finitely.
It is a well-known result that fold-and-unfold multiple points from a line onto a point produces an envelope of a parabola.
Thus we consider each ``side'' of $R$ as a parabolic arc instead of a traditional straight line segment.
The analysis is given in Section~\ref{sect:prelim}.

Using the analysis on the properties of safe regions, we give a linear-time algorithm for finding the region $R$, given a piece of paper $P$ as a sorted list of $n$ corners in counter-clockwise order.
Our algorithm works similarly to Graham's scan for a convex hull.
That is, we consider adding one side of $P$ as a parabola arc at a time, maintaining the loop invariant that regulates the region $R$.
During each iteration, we may destroy some of the previously added parabola arcs.
The key insight is that the destroyed parabola arc must be the one that is closest to the newly added parabolic region.
Thus, it allows us to amortize the cost of destroying, achieving a linear-time algorithm overall.
Section~\ref{sect:algo} explains the algorithm in full detail.

Finally, given any potential query point $f$, we calculate the precise number of parabola arcs of the region $R$.
They turn out to be dependent on a set of inscribed circles, each of which is centered at a node of the straight skeleton of $P$.
We dedicate Section~\ref{sect:skeleton} to focus solely on this property.

We hope that our problem will expand the richness to the family of fold-and-unfold origami problems.
Nevertheless, it could serve as a {\em bridge} joining the previous two problems from \cite{AkitayaBDH021},
since our problem statement is similar to their first problem, but our result is similar to their second problem.


\section{Preliminaries}
\label{sect:prelim}

In this section, we give a formal definition of the problem.
We are given a polygon $P$ as an ordered list of vertices $V(P)=[v_1,v_2,\dots,v_n]$, counter-clockwisely.
We shall treat the list as a circular list.
Given two points $a$ and $b$, we refer to a line segment whose ends are $a$ and $b$ as a segment $\overline{ab}$.
Equivalently, the polygon $P$ is also represented as a list of edges $E(P)=[e_1,e_2,\dots,e_n]$, where $e_i=\overline{v_iv_{i+1}}$ is a line segment joining $v_i$ and $v_{i+1}$.
We denote its boundary as $\delta P$, i.e., $\delta P = \bigcup E(P)$.

We are also given a point $f$ strictly inside $P$.
Consider a point $u\in\delta P$, folding $u$ onto $f$ results in a straight line $L$,
referred to as a {\em crease} line, which passes through the middle point on the line segment $\overline{uf}$ and is also orthogonal to $\overline{uf}$.
We are interested in the set of crease lines when folding every point on $e_i=\overline{v_iv_{i+1}}$ onto $f$.  
The envelope of this family of crease lines corresponds to a parabola,
whose focus is $f$ and directrix is a line resulting from extending the segment $e_i$ to a line.  
We refer to this parabola as $p_i$.
See Figure~\ref{fig:fold-parabola}.

\begin{figure}
  \centering
    \begin{tikzpicture}
      \clip (-2.5,0) rectangle (2.5,5);
      \draw[densely dotted, black] (-2.500,3.618) -- (-1.791,0.000);
      \draw[densely dotted, black] (-2.500,4.761) -- (-1.184,0.000);
      \draw[densely dotted, black] (-2.500,4.515) -- (-0.676,0.000);
      \draw[densely dotted, black] (0.334,-0.000) -- (-2.500,2.608);
      \draw[densely dotted, black] (1.412,-0.000) -- (-2.500,1.565);
      \draw[densely dotted, black] (2.500,0.625) -- (-2.500,0.625);
      \draw[densely dotted, black] (2.500,1.565) -- (-1.412,0.000);
      \draw[densely dotted, black] (-0.334,0.000) -- (2.500,2.608);
      \draw[densely dotted, black] (0.207,0.000) -- (2.500,3.660);
      \draw[densely dotted, black] (2.500,4.515) -- (0.676,0.000);
      \draw[densely dotted, black] (2.500,4.761) -- (1.184,0.000);
      \draw[densely dotted, black] (2.500,3.618) -- (1.791,0.000);
      \draw[red, thick] (-0.207,0.000) -- (-2.500,3.660);
      \fill (0,1) circle (1pt) node[anchor=south] {$\scriptstyle f$};
      \draw[thick] (-5,1/4) -- (5,1/4) node[pos=0.69, anchor=south] {\scriptsize directrix};
      \draw[red, stealth-stealth, densely dashed] (-1.187,0.260) .. controls (-0.990,0.575) and (-0.448,0.909) .. (-0.040,0.990);
      \draw[red, thick, densely dotted] (-1.197,1.580) -- (0.000,1.000);
      \draw[red, thick, densely dotted] (-1.197,0.250) -- (-1.197,1.580);
      \fill[red] (-1.185,0.260) circle (1pt) node[anchor=north] {$\scriptstyle u$};
      \fill[red] (-1.185,1.580) circle (1pt) node[anchor=south] {$\;\scriptstyle u'$};
    \end{tikzpicture}
  \caption{Fold-and-unfold points on a line to a fixed point multiple time, creating an envelope of a parabola}
  \label{fig:fold-parabola}
\end{figure}
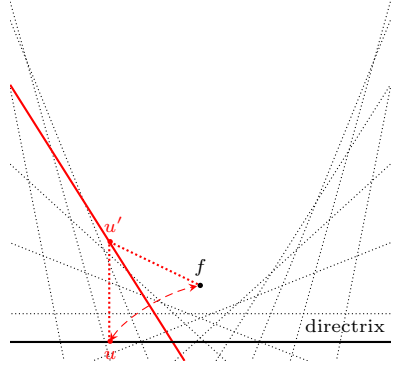

Let line $L'$ erected perpendicularly to the directrix at point $u$, find $u' = L \cap L'$.
We have $|\overline{u'f}| = |\overline{u'u}|$, which indeed obey the definition of parabola.
In other words, $u'$ is a point on parabola's curve.
Furthermore, for $v \in L'$ such that $|\overline{vf}| < |\overline{vu}|$,
the point $v$ is ``safe'' from other creases produced by fold-and-unfold every other points from this directrix.

Since $p_i$ divides the plane into two parabolic half-planes, which we are interested in the region containing $f$.
We define a half-space $H(p_i)$ to be the half-plane containing $f$.
More formally, $H(p_i)$ contains all point $v$ such that $|\overline{vf}| \leq |\overline{vu}|$, where $u$ is an orthogonal projection of $v$ on the line extension of $e_i$.

Our goal is to find
\[
R = \bigcap_{i=1}^n H(p_i),
\]
defined to be the {\em safe region}.
When focusing on point $f$, we sometimes refer to the safe region with respect to point $f$ as $R_f$.

We describe the algorithm in Section~\ref{sect:algo}.
Later in this section, we state relevant geometry facts.

\subsection{A parabola as a projection of a conic section}

A parabola can be viewed as a conic section, i.e., a curve on a surface of a cone intersecting a cutting plane tilted at the same angle of the cone.
Analytically, we consider a Euclidean space $\mathbb{R}^3$.
A {\em cone} is a surface satisfying the equation 
\[
x^2 + y^2 - z^2 = 0
\]
with an apex of the cone at the origin.
A {\em cutting plane} can be defined with an equation 
\[
x\cos\theta + y\sin\theta - z = r,
\]
where $r$ is the distance on $xy$-plane from cone's apex to the nearest point of the cutting plane,
and $\theta$ is the {\em directional angle} on $xy$-plane to that point.
By this definition, we have a parabola as a curve on the tilted cutting plane.
See Figure~\ref{fig:conic-one-parabola} and \ref{fig:conic-projection}.

A projected parabola is an orthogonal projection of the tilted parabola onto $xy$-plane, which is also a parabola.
The projected parabola have cone's apex as its focus, and an intersect line of the cutting plane with $xy$-plane as its directrix.
This projection viewpoint was briefly mention at the end of \cite{fortune1986sweepline}.
In this paper, we refer to projected parabolas simply as parabolas.

We say that parabola $p$ is in the {\em upright form} if, by rotating and translating $xy$-plane,
the parabola possess an analytical form of $y=tx^2$ where $t > 0$.

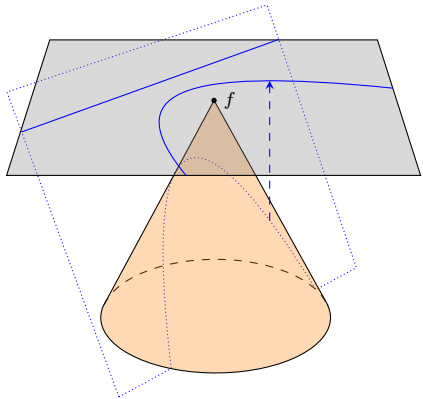
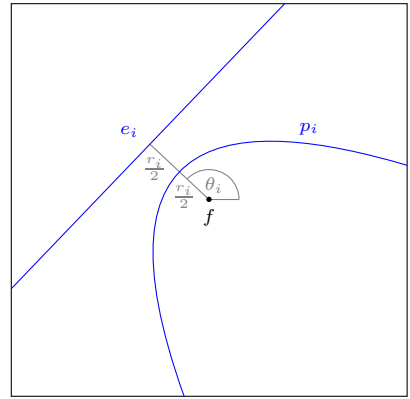
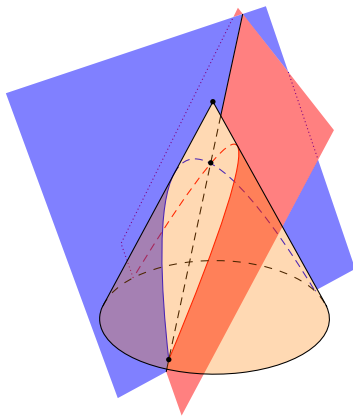
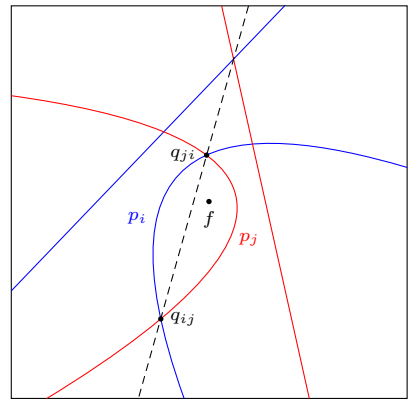
\begin{figure}
  \centering
  \begin{subfigure}[t]{0.45\textwidth}
    \centering
    \begin{tikzpicture}[yscale=-1]
      \draw[blue, densely dotted]
      (1.703,6.794) --
      (2.391,6.417) .. controls (1.974,2.840) and (2.809,2.926) .. (4.294,5.371) --
      (4.826,5.080) --
      (3.651,1.597) --
      (0.231,2.751) --
      cycle;
      \draw[dashed] (1.509,5.571) arc (192.74:347.56:1.503 and 0.771);
      \draw[fill=orange, fill opacity=0.3]
      (4.446,5.574) arc (346.74:553.23:1.511 and 0.734) --
      (1.506,5.574) --
      ++(1.449,-2.711) --
      ++(1.494,2.711) --
      cycle;
      \draw[fill=gray, fill opacity=0.3]
      (0.226,3.857) --
      (0.794,2.060) --
      (5.103,2.060) --
      (5.671,3.857) --
      cycle;
      \draw[blue] (2.586,3.857) .. controls (1.694,2.769) and (2.446,2.406) .. (5.306,2.700);
      \draw[blue] (0.426,3.280) -- (3.806,2.054);
      \fill (2.954,2.863) circle (1pt) node[anchor=west] {$\scriptstyle f$};
      \draw[blue, dashed, -stealth] (3.686,4.454) -- ++(0,-1.851);
    \end{tikzpicture}
    \caption{Mapping a tilted parabola to $xy$-plane}
    \label{fig:conic-one-parabola}
  \end{subfigure}
  \hfill
  \begin{subfigure}[t]{0.45\textwidth}
    \centering
    \begin{tikzpicture}[yscale=-1]
      \draw (7.334,6.900) -- ++(0.000,-5.206) -- ++(5.206,0.000) -- ++(0.000,5.206) -- cycle;
      \draw[gray] (9.649,4.017) arc (223.44:360.29:0.394);
      \draw[gray] (10.331,4.291) -- ++(-0.397,0.000);
      \draw[gray] (9.934,4.291) -- ++(-0.388,-0.367) node[pos=0.85, anchor=north] {$\scriptstyle \frac{r_i}2$};
      \draw[gray] (9.546,3.924) -- ++(-0.388,-0.367) node[pos=0.85, anchor=north] {$\scriptstyle \frac{r_i}2$};
      \node[gray] at (10.0,4.1) {$\scriptstyle\theta_i$};
      \draw[blue] (9.609,6.906) .. controls (8.311,3.260) and (10.306,3.174) .. (12.540,3.837) node[pos=0.8, anchor=south] {$\scriptstyle p_i$};
      \draw[blue] (7.334,5.474) -- (10.934,1.689) node[pos=0.5, anchor=south east] {$\scriptstyle e_i$};
      \fill (9.934,4.289) circle (1pt) node[anchor=north] {$\scriptstyle f$};
    \end{tikzpicture}
    \caption{Projection of a parabola on $xy$-plane}
    \label{fig:conic-projection}
  \end{subfigure}
  \begin{subfigure}[t]{0.45\textwidth}
    \centering
    \begin{tikzpicture}[yscale=-1]
      \draw[blue] (2.374,6.283) .. controls ++(-0.109,-0.974) and ++(-0.226,0.417) .. ++(0.071,-2.469);
      \draw[blue, densely dashed] (2.446,3.814) .. controls ++(0.311,-0.589) and ++(-0.797,-1.320) .. ++(1.851,1.560);
      \draw[blue, densely dotted] (4.297,5.374) -- (4.326,5.354);
      \draw[blue, densely dotted] (3.946,2.469) -- (4.337,3.631);
      \draw[red] (2.374,6.283) .. controls (3.080,4.583) and (3.374,3.629) .. (3.277,3.451);
      \draw[red, densely dashed] (3.277,3.451) .. controls ++(-0.140,-0.220) and ++(0.423,-0.706) .. ++(-1.371,1.749);
      \draw[red, densely dotted] (1.906,5.200) -- (1.746,4.743) -- (3.223,1.740);
      \fill[blue, opacity=0.5]
      (1.703,6.794) --
      ++(0.637,-0.349) --
      ++(0.034,-0.163) .. controls ++(-0.109,-0.974) and ++(-0.226,0.417) .. ++(0.071,-2.469) --
      ++(0.509,-0.951) --
      (3.057,3.051) -- 
      ++(0.286,-1.354) --
      ++(0.603,0.771) --
      (3.651,1.597) --
      (0.231,2.751) --
      cycle;
      \fill[blue, opacity=0.5]
      (3.866,4.523) --
      (4.326,5.354) --
      (4.826,5.080) --
      (4.337,3.631) --
      cycle;
      \fill[red, opacity=0.5]
      (2.543,7.026) --
      ++(-0.203,-0.580) --
      ++(0.034,-0.163) .. controls (3.080,4.583) and (3.374,3.629) .. (3.277,3.451) --
      (3.057,3.051) -- 
      ++(0.289,-1.354) --
      ++(-0.123,0.043) --
      ++(0.060,-0.123) --
      ++(1.263,1.623) --
      cycle;
      \draw (3.346,1.700) -- (3.057,3.051);
      \draw (2.374,6.283) -- (2.340,6.446);
      \draw[densely dashed] (3.057,3.051) -- (2.374,6.283);
      \draw[dashed] (1.509,5.571) arc (192.74:347.56:1.503 and 0.771);
      \draw[fill=orange, fill opacity=0.3]
      (4.446,5.574) arc (346.74:553.23:1.511 and 0.734) --
      (1.506,5.574) --
      ++(1.449,-2.711) --
      ++(1.494,2.711) --
      cycle;
      \fill (2.954,2.863) circle (1pt);
      \fill (2.926,3.674) circle (1pt);
      \fill (2.374,6.283) circle (1pt);
    \end{tikzpicture}
    \caption{Two cutting planes}
    \label{fig:conic-two-parabolas}
  \end{subfigure}
  \hfill
  \begin{subfigure}[t]{0.45\textwidth}
    \centering
    \begin{tikzpicture}[yscale=-1]
      \draw (7.334,6.900) -- ++(0.000,-5.206) -- ++(5.206,0.000) -- ++(0.000,5.206) -- cycle;
      \draw[blue] (9.609,6.906) .. controls (8.311,3.260) and (10.306,3.174) .. (12.540,3.837) node[pos=0.3, anchor=east] {$\scriptstyle p_i$};
      \draw[blue] (7.334,5.474) -- (10.934,1.689);
      \draw[red] (7.797,6.903) .. controls ++(3.991,-2.454) and ++(3.214,0.449) .. ++(-0.463,-4.017) node[pos=0.35, anchor=west] {$\scriptstyle p_j$};
      \draw[red] (11.254,6.903) -- (10.100,1.691);
      \draw[densely dashed] (10.454,1.694) -- (9.006,6.900);
      \fill (9.934,4.289) circle (1pt) node[anchor=north] {$\scriptstyle f$};
      \fill (9.903,3.674) circle (1pt) node[anchor=east] {$\scriptstyle q_{ji}$};
      \fill (9.300,5.846) circle (1pt) node[anchor=west] {$\scriptstyle q_{ij}$};
    \end{tikzpicture}
    \caption{Angle bisector and two intersections}
    \label{fig:conic-angle-bisector}
  \end{subfigure}
  \caption{Interpretation of a parabola as a projection of a conic section}
  \label{fig:conic-interpretation}
\end{figure}

\subsection{Two parabolas}
\label{sect:two-parabolas}

We define {\em semi-confocal parabolas} as a family of parabolas that share the same focus,
but their directrixes does not need to have the same directional angle.
It follows that semi-confocal parabolas are projected parabolas from multiple cutting planes that cut the same cone.
We state two important facts on intersections of two parabolas 
which, under the conic interpretation, are straight-forward.

\begin{lemma}
Two semi-confocal parabolas does not intersect iff their directional angles is the same.
\end{lemma}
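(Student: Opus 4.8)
The plan is to stay inside the conic picture already set up. Each parabola $p_i$ is the vertical ($z$-axis) projection of the plane section $\Pi_i = \pi_i \cap C$, where $C$ is the cone and $\pi_i\colon x\cos\theta_i + y\sin\theta_i - z = r_i$ is its cutting plane; moreover $\Pi_i$ lies on a single nappe of $C$, and vertical projection restricted to that nappe is a bijection onto the $xy$-plane. Hence $p_i$ and $p_j$ meet in the plane exactly when the space curves $\Pi_i$ and $\Pi_j$ meet, and since $\Pi_i\cap\Pi_j = (\pi_i\cap\pi_j)\cap C$, the entire question reduces to: \emph{when does the intersection of the two cutting planes touch the cone?}

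For the ``$\Leftarrow$'' direction, the cutting planes have normals $(\cos\theta_i,\sin\theta_i,-1)$ and $(\cos\theta_j,\sin\theta_j,-1)$, and these are parallel precisely when $\theta_i=\theta_j$. So when the directional angles agree, $\pi_i\parallel\pi_j$; for two distinct parabolas we must have $r_i\neq r_j$ (a parabola is determined by its focus and directrix, so equal directrixes would force the same parabola), hence the planes are disjoint, $\Pi_i\cap\Pi_j=\emptyset$, and $p_i\cap p_j=\emptyset$. (The excluded case $r_i=r_j$ is simply the same parabola.)

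For the ``$\Rightarrow$'' direction we have $\theta_i\neq\theta_j$, so $\pi_i\cap\pi_j$ is a line $\ell$, and we must show $\ell$ reaches the cone. I plan to verify this by the short computation suggested by Figure~\ref{fig:conic-angle-bisector}: a common point of $p_i$ and $p_j$ is equidistant from $f$ and from both directrixes, hence lies on an angle bisector of the two directrixes, and that bisector is exactly the projection of $\ell$. Writing each parabola in focal polar form $\rho = r_i/(1+\cos(\phi-\theta_i))$, the condition ``$p_i$ and $p_j$ share the point in direction $\phi$'' becomes an equation $A\cos\phi + B\sin\phi = C$ with $A^2+B^2-C^2 = 2r_ir_j\bigl(1-\cos(\theta_i-\theta_j)\bigr)$. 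This difference is $\geq 0$ always and is strictly positive exactly when $\theta_i\neq\theta_j$, so in that case the equation has (two) solutions $\phi$; a final check that neither solution is one of the ``escape directions'' $\theta_i+\pi$ or $\theta_j+\pi$ — again forced by $\cos(\theta_i-\theta_j)\neq 1$ — confirms each $\phi$ gives an honest finite point on both curves. Equivalently, one can parametrise $\ell$ directly and check that $x^2+y^2-z^2$ restricted to it is a quadratic with positive leading coefficient $(1-\cos(\theta_i-\theta_j))^2$ whose minimum is nonpositive; this is the same inequality in $3$D dress.

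The main obstacle is precisely this last direction: ``planes not parallel $\Rightarrow$ line touches the cone'' is not automatic — a line lying in two planes that are each tangent-type to the cone need not enter the cone — so the inequality $A^2+B^2\geq C^2$ (or the discriminant of the quadratic along $\ell$) has to be computed; once it collapses to $1-\cos(\theta_i-\theta_j)>0$, the rest is sign bookkeeping together with the bijectivity of the projection on one nappe.
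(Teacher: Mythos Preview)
Your proposal is correct, and for the direction the paper actually argues it coincides with the paper's proof: equal directional angles make the two cutting planes parallel, hence disjoint, hence the projected parabolas cannot meet. The paper's own proof stops there --- it is a one-sentence argument covering only the ``same angle $\Rightarrow$ no intersection'' implication.

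The genuine difference is that you take the converse seriously, whereas the paper does not prove it in this lemma at all (it is implicitly asserted in the proof of the next lemma, where the authors simply write that the line of intersection of the two cutting planes ``pieces through the cone exactly two times''). You correctly identify that this is not automatic and supply the missing computation: writing both parabolas in focal polar form $\rho = r/(1+\cos(\phi-\theta))$ reduces the common-point condition to $A\cos\phi + B\sin\phi = C$ with $A^2+B^2-C^2 = 2r_ir_j\bigl(1-\cos(\theta_i-\theta_j)\bigr)$, which is strictly positive exactly when $\theta_i\neq\theta_j$ (using $r_i,r_j>0$), and your check that neither solution falls on an escape direction $\theta_k+\pi$ is also correct. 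So your argument is strictly more complete than the paper's; what it buys is an honest proof of the biconditional rather than an appeal to a picture, at the cost of a short trigonometric calculation.
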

\begin{proof}
Two parabolas with the same directional angle are produced from two parallel cutting planes in the conic view, which never intersect.
\end{proof}

\begin{lemma}
\label{lemma:angle-bisector}
If two semi-confocal parabolas intersect, then they intersect at two points which also lie on an angle bisector of their directrixes.
\end{lemma}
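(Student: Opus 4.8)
The plan is to push the whole statement up to the cone, where it becomes the elementary fact that a line meets a quadric in at most two points. Translate so that the common focus $f$ is the apex of the cone; then, by the conic interpretation of the previous subsection, $p_i$ and $p_j$ are the vertical projections $\pi\colon(x,y,z)\mapsto(x,y)$ of the sections $C_i=\mathrm{cone}\cap\Pi_i$ and $C_j=\mathrm{cone}\cap\Pi_j$, where $\Pi_i\colon x\cos\theta_i+y\sin\theta_i-z=r_i$ and $\Pi_j\colon x\cos\theta_j+y\sin\theta_j-z=r_j$ are the cutting planes with $r_i,r_j>0$. Both planes slice the same nappe $N$ of the cone, and $\pi$ restricted to $N$ is a bijection onto the $xy$-plane (the nappe is a graph over it), so every plane point $q$ has a unique lift $\hat q\in N$, with $q\in p_i\iff\hat q\in\Pi_i$. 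Hence $q\in p_i\cap p_j$ exactly when $\hat q$ lies on the line $\ell:=\Pi_i\cap\Pi_j$; note that $\ell$ really is a line, since $\theta_i\neq\theta_j$ (equal directional angles would give parallel planes and, by the preceding lemma, no intersection, against the hypothesis), so $\Pi_i\neq\Pi_j$. Therefore
\[
  p_i\cap p_j=\pi(\ell\cap\mathrm{cone}).
\]

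From this both halves of the lemma drop out. A line meets the cone $x^2+y^2-z^2=0$ in at most two points, so $p_i$ and $p_j$ meet in at most two points, and since the intersection is nonempty by assumption and, away from the degenerate case where $\ell$ is tangent to the cone, meets it transversally, there are exactly two. For the bisector claim, these points lie on the line $b:=\pi(\ell)$. First, $b$ runs through $d_i\cap d_j$, the common point of the two directrixes, because $\ell\cap\{z=0\}=\Pi_i\cap\Pi_j\cap\{z=0\}=d_i\cap d_j$ (if $d_i\parallel d_j$, the same computation shows $b$ is the line midway between them). Second, $\ell$ has direction $\mathbf{n}_i\times\mathbf{n}_j$ with $\mathbf{n}_i=(\cos\theta_i,\sin\theta_i,-1)$ and $\mathbf{n}_j=(\cos\theta_j,\sin\theta_j,-1)$, whose $xy$-part $(\sin\theta_j-\sin\theta_i,\ \cos\theta_i-\cos\theta_j)$ is, by the sum-to-product identities, proportional to $(\cos\tfrac{\theta_i+\theta_j}{2},\sin\tfrac{\theta_i+\theta_j}{2})$ — exactly the direction of one of the two angle bisectors of $d_i$ and $d_j$, whose normals are proportional to $(\cos\theta_i,\sin\theta_i)\pm(\cos\theta_j,\sin\theta_j)$. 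A line through $d_i\cap d_j$ with this direction is that bisector, so $b$ is an angle bisector of the directrixes, as claimed.

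For the record, the ``lies on a bisector'' half also has a one-line metric proof: any $q\in p_i\cap p_j$ satisfies $\operatorname{dist}(q,f)=\operatorname{dist}(q,d_i)$ and $\operatorname{dist}(q,f)=\operatorname{dist}(q,d_j)$, hence $\operatorname{dist}(q,d_i)=\operatorname{dist}(q,d_j)$ and $q$ is equidistant from the two directrix lines; the conic picture is needed only to see that there are just two such points and that both lie on the same one of the two bisectors. The step carrying the real content is the trigonometric identification that $\pi(\ell)$ is precisely a bisector direction; the bijectivity of $\pi|_N$ and the line-versus-quadric count are routine once the conic setup is in place. The one genuinely delicate point is the exact count ``two'', which rests on excluding tangency of $\ell$ with the cone — and if that is awkward to rule out in full generality, the clean fallback is to state and use the lemma with ``at most two points''.
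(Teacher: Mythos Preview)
Your proof is correct and follows the same conic-section strategy as the paper: lift both parabolas to the cone, intersect the two cutting planes to obtain a line, and read off the intersection with the cone. The paper's own argument is a three-line sketch that simply asserts the line ``pierces through the cone exactly two times'' and leaves the angle-bisector identification to a figure, whereas you actually carry out the direction computation via $\mathbf{n}_i\times\mathbf{n}_j$ and the sum-to-product identities, supply the alternative equidistance argument, and correctly flag the tangency caveat for the exact count---all details the paper omits.
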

\begin{proof}
Take conic section interpretation.
Their curves on the surface of the cone must also lie on their cutting planes, which the intersection of the planes is a straight line.
This line piece through the cone exactly two times.
See Figure~\ref{fig:conic-two-parabolas} and \ref{fig:conic-angle-bisector}.
\end{proof}

It follows from Lemma~\ref{lemma:angle-bisector} that two intersecting semi-confocal parabolas $p_i,p_j$ produce a safe region $R = H(p_i) \cap H(p_j)$ which is a bounded convex region.  Moreover, since we only deals with parabolas whose directrixes are from boundary edges of a convex polygon, the non-intersecting case never occurs.

Given two parabola $p_i$ and $p_j$, we can compute their intersections using analytical techniques in $O(1)$ time as follows.
We reduce the problem of finding intersection of two parabolas to the problem of finding intersection of a parabola and a line.
Without loss of generality, we consider $p_i$ in the upright form.
From Lemma~\ref{lemma:angle-bisector}, we find an angle bisector $b$ such that it divide the inner angle between edges $e_i$ and $e_j$.
Then we find the resulting intersections of $p_i$ and $b$ using quadratic equation.

We remark particularly on the structure of the safe region $R$.

Consider each parabola $p_i$ in the upright form.
We can partition this parabola using the two intersection points into 3 arcs: the {\em left arc}, the {\em central arc}, and the {\em right arc}, 
where the left arc corresponds to the half parabola unbounded to $-\infty$ and the right arc corresponds to the half parabola unbounded to $+\infty$, and
the central arc lies between the two intersection points.  (See Figure~\ref{fig:parabola-arc-decomp}.)  
Under this notation, we note that left arc of $p_i$ intersects $p_j$ only once 
at the intersection point where it also intersects the right arc of $p_j$, and vise versa.  

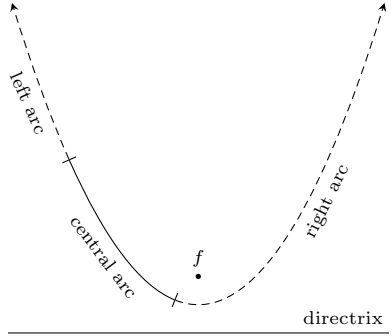
\begin{figure}
  \centering
    \begin{tikzpicture}
      \clip (-2.5,0) rectangle (2.5,5);
      \draw[densely dashed, -stealth] (-1.700,2.552) .. controls (-1.950,3.118) and (-2.200,3.810) .. (-2.450,4.627) node[pos=0.45, anchor=north, rotate=-66] {\scriptsize left arc};
      \draw[densely dashed, -stealth] (-0.300,0.685) .. controls (0.617,0.318) and (1.533,1.632) .. (2.450,4.627) node[pos=0.65, anchor=north, rotate=64] {\scriptsize right arc};
      \draw[|-|] (-1.700,2.552) .. controls (-1.233,1.494) and (-0.767,0.872) .. (-0.300,0.685) node[pos=0.45, anchor=north, rotate=-53] {\scriptsize central arc};
      \fill (0,1) circle (1pt) node[anchor=south] {$\scriptstyle f$};
      \draw (-5,1/4) -- (5,1/4) node[pos=0.69, anchor=south] {\scriptsize directrix};
    \end{tikzpicture}
    \caption{Arc decomposition of a parabola}
    \label{fig:parabola-arc-decomp}
\end{figure}

In our analysis where there are parabolas $p_1,p_2,\ldots,p_n$, we refer to the two intersection points between parabolas $p_i$ and $p_j$ as $q_{ij}$ and $q_{ji}$.  
To distinguish between these two points, imagine one traverses counter-clockwisely on the boundary of $H(p_i)\cap H(p_j)$,
one would see an arc of $p_i$, the intersecting point $q_{ij}$, the arc of $p_j$ and then the intersection point $q_{ji}$.  
The counter-clockwise definitions of these points are crucial to our proof of Lemma~\ref{lemma:qk1-inside}.
See Figure~\ref{fig:conic-angle-bisector}.

\subsection{Many parabolas}

In this section, we analyze the structure of the intersection of $k$ semi-confocal parabolas, extending the result from the previous section.

Let $p_1,p_2,\ldots,p_k$ be $k$ semi-confocal parabolas with different directional angles.
We shall consider the safe region of these parabolas and prove the following lemma.

\begin{lemma}\label{lemma:one-arc}
Each parabola touches at most one arc of the safe region.
\end{lemma}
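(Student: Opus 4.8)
The plan is to argue by contradiction: suppose some parabola $p_m$ contributes two disjoint arcs to the boundary of the safe region $R = \bigcap_{i=1}^k H(p_i)$. Since each $p_m$ divides the plane and $H(p_m)$ is the side containing $f$, the parabola $p_m$ in its upright form is the boundary of a convex region; its left and right arcs go off to infinity, while any bounded piece of $\delta R$ lying on $p_m$ must be a sub-arc cut out by intersection points with the other parabolas. Two disjoint arcs of $\delta R$ on $p_m$ would be separated (along $p_m$) by a portion of $p_m$ that lies strictly outside $R$, i.e. a point $z \in p_m$ with $z \notin H(p_j)$ for some $j$; and then re-entering $R$ further along $p_m$ requires crossing $p_j$ again. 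So the key sub-claim is that $p_m$ and $p_j$, being semi-confocal with distinct directional angles, cross in exactly two points $q_{mj}, q_{jm}$ (Lemma~\ref{lemma:angle-bisector}), and along the curve $p_m$ these two crossings delimit exactly one in-arc and one out-arc with respect to $H(p_j)$. This already rules out $p_m$ leaving and re-entering $H(p_j)$ more than once, but I need to control \emph{all} the other half-planes simultaneously.

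The main step is therefore a combinatorial/convexity argument using the arc-decomposition already set up in Section~\ref{sect:two-parabolas}. Fix $p_m$ in upright form and parametrize it left-to-right. For each $j \neq m$, the portion of $p_m$ lying in $H(p_j)$ is, by the exactly-two-intersections fact together with the convexity of $H(p_j)$, a single connected sub-arc $I_j \subseteq p_m$ (an interval in the left-to-right parametrization, possibly a ray). The part of $p_m$ on $\delta R$ is exactly $p_m \cap R = p_m \cap \bigcap_{j \neq m} H(p_j) = \bigcap_{j \neq m} I_j$, an intersection of intervals on a line, which is itself a single interval — hence a single arc. That is the whole argument modulo the lemma that $I_j$ is connected; I would prove the latter by noting that if $p_m$ exited $H(p_j)$ and came back, the two re-crossings plus the two original crossings would give $p_m$ and $p_j$ at least three common points (or force a tangency), contradicting Lemma~\ref{lemma:angle-bisector}, or, more cleanly in the conic picture, contradicting the fact that the line $(\text{cutting plane of }p_m)\cap(\text{cutting plane of }p_j)$ meets the cone in at most two points.

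I expect the main obstacle to be the degenerate and boundary cases, namely: parabolas whose intersection points coincide or are tangent; the case where an ``interval'' $I_j$ is actually a half-line or is empty; and making sure ``single arc'' is the right conclusion when some $I_j \cap I_{j'} = \emptyset$, in which case $p_m$ contributes \emph{zero} arcs (which is still ``at most one'' and so fine, but must be handled in the write-up). A secondary subtlety is that $p_m$ itself might be tangent to $R$ at a single point rather than along an arc; I would treat a single touching point as a (degenerate) arc, consistent with the statement. The convexity of each $H(p_j)$ — which follows from Lemma~\ref{lemma:angle-bisector} via the bounded-convex-region remark — and the ``exactly two intersections'' property are the only non-trivial inputs, and both are already available, so the remaining work is the interval-intersection observation plus careful case bookkeeping.
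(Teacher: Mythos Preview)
Your approach is correct and more direct than the paper's. The paper argues by induction on $k$: assuming the result for $R'=\bigcap_{i<k}H(p_i)$, it adds $H(p_k)$ and, via a two-case split (according to whether both endpoints of $p_k$'s arc land on the same arc of $R'$ or on two different arcs), shows that $p_k$ contributes a single arc to $R=R'\cap H(p_k)$. You bypass the induction entirely: fix $p_m$, write $p_m\cap R=\bigcap_{j\neq m}\bigl(p_m\cap H(p_j)\bigr)$, and observe that an intersection of intervals in the left-to-right parametrization of $p_m$ is again an interval. This is cleaner and makes the one-arc property an immediate corollary of the two-parabola structure, whereas the paper's case analysis is somewhat ad hoc.

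One point to tighten. Your stated justification that $I_j=p_m\cap H(p_j)$ is connected---``exactly two intersections together with the convexity of $H(p_j)$''---does not quite suffice. With exactly two transversal crossings the in/out pattern along $p_m$ could a priori be in--out--in, giving $I_j$ as the union of two unbounded rays; convexity of $H(p_j)$ alone does not exclude this (for instance the parabola $y=x^2$ meets the convex half-plane $\{y\ge 1\}$ in two rays), and your crossing-count argument only rules out three or more components, not two. The fix is already at hand in the paper: since $p_m\subset H(p_m)$, one has $I_j=p_m\cap H(p_j)\subset H(p_m)\cap H(p_j)$, and the remark following Lemma~\ref{lemma:angle-bisector} says this last set is bounded. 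Hence the unbounded ends of $p_m$ lie outside $H(p_j)$, the pattern must be out--in--out, and $I_j$ is the single bounded interval between the two crossings. With this adjustment your interval-intersection argument goes through verbatim.
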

\begin{proof}
    We prove by induction on $k$.  The case where $k=2$ follows from Lemma~\ref{lemma:angle-bisector}.  Consider $k$ parabolas.  Let $R'=\bigcap_{i=1}^{k-1} H(p_i)$.  Inductively, each parabola $p_1,p_2,\ldots,p_{k-1}$ touches
    at most one arc of $R'$.  We consider $R=R'\cap H(p_k)$, we shall show that $p_k$ only touches at most one arc of $R$.  If $R' \subseteq H(p_k)$, then $R=R'$ and $p_k$ does not touch $R$; hence the lemma is true.
    We then assume that $R' \not\subseteq H(p_k)$.

    Clearly $p_k$ touches one arc of $R=R'\cap H(p_k)$.  Call that arc $a_k$.
    Each endpoint of $a_k$ belongs to some arc of $R'$.

    There are two cases.  
    
    {\em Case 1}: both endpoints of $a_k$ belongs to a single arc of $p_i$.  In this case, $p_k$ intersects exactly one arc of $R'$ exactly twice. Then the safe region $R$ is the intersection of exactly two parabola, i.e. $R=H(p_i)\cap H(p_k)$.  Thus, the lemma follows from Lemma~\ref{lemma:angle-bisector}. 

    {\em Case 2:} one endpoint of $a_k$ belongs to $p_i$ while the other belongs to $p_j$.  We show that in this case, $p_k$ intersects exactly two arcs, implying the lemma.  
    We consider $p_i$ first, i.e., we look at the intersection $R_i=H(p_k)\cap H(p_i)$.   Let $q_i$ be the intersecting point of $p_i$ and $p_k$.  
    We can partition $p_k$ into three arcs; let $b_k$ be the unbounded arc starting at $q_i$.   From Lemma~\ref{lemma:angle-bisector}, we know that $b_k$ only intersect $R_i$ once.   Since $R\subseteq R_i$, we have that $b_k$ also intersects $R$ at most once at $q_i$.  We follow the same argument for $p_j$.  Thus, $p_k$ only intersects $R'$ exactly twice as claimed.    
\end{proof}


\section{The algorithm}
\label{sect:algo}

Our algorithm for finding a safe region works in a similar manner to Graham's scan~\cite{GRAHAM1972132} for convex hull.
We briefly described the algorithm as a pseudocode in Algorithm~\ref{algo:safe-region}.
Later in this section, we explain the algorithm and prove its correctness.

\begin{algorithm}
  \caption{Our algorithm for finding the safe region}
  \label{algo:safe-region}
  \DontPrintSemicolon
  \SetKwBlock{Function}{function}{end function}
  \Function($\textsc{SafeRegion} {(} P : \textsc{Polygon}, f : \textsc{Point} {)}$) {
    $[e_1,e_2,e_3,\ldots,e_n] \gets E(P)$\;
    $p_1 \gets \textsc{Parabola}(f,e_1)$\;
    $p_2 \gets \textsc{Parabola}(f,e_2)$\;
    $R \gets \textsc{DoublyLinkedList}(\textsc{Arc}(p_1,q_{21},q_{12}), \textsc{Arc}(p_2,q_{12},q_{21}))$\;
    \For{$i \in [3,4,5,\dots,n]$} {
      $p_i \gets \textsc{Parabola}(f,e_i)$\;
      $(p_H,\ell_H,r_H) \gets \textsc{Head}(R)$\;
      $(p_T,\ell_T,r_T) \gets \textsc{Tail}(R)$\;
      \If {$q_{iH} \in \textsc{LeftArc}(p_H,\ell_H) \wedge q_{Ti} \in \textsc{RightArc}(p_T,r_T)$} {
        \Continue
        \tcp*{skip this iteration}
      }
      \While {$q_{iH} \in \textsc{RightArc}(p_H,r_H)$} {
        $R \gets \textsc{RemoveHead}(R)$\;
        $(p_H,\underline{\hspace{.5em}},r_H) \gets \textsc{Head}(R)$
        \tcp*{also recompute $q_{iH}$}
      }
      \While {$q_{Ti} \in \textsc{LeftArc}(p_T,\ell_T)$} {
        $R \gets \textsc{RemoveTail}(R)$\;
        $(p_T,\ell_T,\underline{\hspace{.5em}}) \gets \textsc{Tail}(R)$
        \tcp*{also recompute $q_{Ti}$}
      }
      $R \gets \textsc{UpdateHead}(R, \textsc{Arc}(p_H,q_{iH},r_H))$\;
      $R \gets \textsc{UpdateTail}(R, \textsc{Arc}(p_T,\ell_T,q_{Ti}))$\;
      $R \gets \textsc{AppendTail}(R, \textsc{Arc}(p_i,q_{Ti},q_{iH}))$\;
    }
    \Return $R$\;
  }
\end{algorithm}

Since our goal is to find a safe region
\[
R = \bigcap_{i=1}^n H(p_i),
\]
we iterate over parabola $p_i$'s producing a partial solution $R_i$, such that
\[
R_i = \bigcap_{j=1}^i H(p_j),
\]
i.e., $R_i$ is the safe region for the first $i$ parabolas.  We maintain $R_i$ as a cyclic list of parabola arcs
\[
a_1,a_2,\ldots,a_k,
\]
where each arc $a_j$ is a 3-tuple $(p,\ell,r)$ which keeps a reference to the parabola $a_j.p$, its left endpoint $a_j.\ell$, and its right endpoint $a_j.r$.  
We note that with this representation $a_j.r=a_{j+1}.\ell$ for $1\leq j<k$, and $a_k.r=a_1.\ell$.  
We also note that, using the notation defined in Section~\ref{sect:two-parabolas}, $a_j.r$ is $q_{j(j+1)}$ and $a_j.\ell$ is $q_{(j-1)j}$.

\begin{figure}
  \centering
  \begin{tikzpicture}
    \draw[densely dotted] (1.6,3) .. controls (-0.5,3) and (-0.7,0) .. (2.5,0);
    \draw[violet, dashed, -stealth] (3.033,1.338) .. controls (3.171,0.979) and (3.285,0.533) .. (3.375,0.000);
    \draw[violet, dashed, -stealth] (0.245,1.469) .. controls (-0.124,1.090) and (-0.523,0.600) .. (-0.952,-0.000); 
    \draw[blue, dashed, -stealth] (0.245,1.469) .. controls (-0.308,1.810) and (-0.890,2.320) .. (-1.501,3.000); 
    \draw[blue, dashed, -stealth] (3.033,1.338) .. controls (3.536,1.672) and (4.001,2.226) .. (4.427,3.000);
    \draw[violet] (3.8,2.4) -- (1.6,3) node[pos=0.5, anchor=south] {$\scriptstyle e_1$};
    \draw[blue] (2.5,0) -- (4.8,0.4) node[pos=0.8, anchor=north] {$\scriptstyle e_k$};
    \draw[red] (4.8,0.4) -- (5,1.6) node[pos=0.5, anchor=west] {$\scriptstyle e_i$};
    \draw[violet] (3.033,1.338) .. controls (2.482,2.770) and (1.553,2.813) .. (0.245,1.469);
    \draw[blue] (0.245,1.469) .. controls (1.277,0.833) and (2.206,0.789) .. (3.033,1.338);
    \draw[red] (3.449,3.000) .. controls (3.691,1.913) and (3.413,0.913) .. (2.613,0.000);
    \fill (2,2) circle (1pt) node[anchor=east] {$\scriptstyle f$};
    \fill (3.033,1.338) circle (1pt) node[anchor=east] {$\scriptstyle q_{k1}$};
    \fill (3.19,0.84) circle (1pt) node[anchor=west] {$\scriptstyle q_{i1}$};
  \end{tikzpicture}
  \caption{Proof of Lemma~\ref{lemma:qk1-inside}}
  \label{fig:qk1-inside}
\end{figure}

Initially, we start with $R_2=H(p_1)\cap H(p_2)$.
Which we encode the partial safe region as an ordered list of arcs, $A(R_2) = [a_1,a_2]$, where  
\begin{align*}
a_1 &= (p_1,q_{21},q_{12}) \\
a_2 &= (p_2,q_{12},q_{21})
\end{align*}

For each iteration $i>2$, we consider adding $H(p_i)$ to $R_{i-1}$ to produce $R_i = R_{i-1}\cap H(p_i)$.
There are 3 cases:
\begin{itemize}
\item {\em Case 1:} $p_i$ does not change the region, i.e., $R_{i-1}\cap H(p_i) = R_{i-1}$ and we can discard $p_i$,
\item {\em Case 2:} $p_i$ clips the region, i.e., all parabola arcs in $R_{i-1}$ remains the boundary of $R_{i-1}\cap H(p_i)$, or
\item {\em Case 3:} $p_i$ {\em eclipses} other parabolas in the region, i.e., some parabola arc is entirely outside $R_{i-1}\cap H(p_i)$.
\end{itemize}

To distinguish between these cases, our basic procedure is to test if a point lies in $H(p_i)$.  
The counter-clockwise ordering of parabola ensures that $p_i$ would affect two sequences of arcs, i.e., clockwisely 
\[
a_k,a_{k-1},a_{k-2},\ldots,
\]
to be referred to as the {\em neighbors to the left} of $p_i$, 
and counter-clockwisely,
\[
a_1, a_2, a_3,\ldots,
\]
to be referred to as the {\em neighbors to the right} of $p_i$, 

We first consider point $q_{k1}=a_k.r$ (which is also $a_1.\ell$).  
Lemma~\ref{lemma:qk1-inside} below ensure that we are in case 1 if $q_{k1}\in H(p_i)$.  See Figure~\ref{fig:qk1-inside}.

Otherwise, some part of $R_{i-1}$ is below parabola $p_i$.  We in turns consider points
\[
a_{k-1}.r, a_{k-2}.r, \ldots,
\]
from the neighbors to the left of $p_i$ and find the largest index $j$ such that $a_j \cap H(p_i)\neq \emptyset$.  
In this case, the parabola $a_k.p, a_{k-1}.p,\ldots, a_{j+1}.p$ are eclipsed by $p_i$.

We also process the neighbors to the right of $p_i$ similarly by finding 
the smallest index $j'$ such that $a_{j'}\cap H(p_i)\neq \emptyset$ together with the sequence of eclipsed arcs $a_1,a_2,\ldots,a_{j'-1}$.  We note that it can be the case that $j=j'$, when only one arc survives $p_i$ eclipsing.

To construct $R_i$, we discard eclipsed arcs, add a new arc $a_i$ for $p_i$ and compute 
\begin{itemize}
\item the left intersection point $a_i.\ell=a_j.r$ which is the intersection between $p_i$ and $a_j.p$, and
\item the right intersection point $a_i.r=a_{j'}.\ell$ which is the intersection between $p_i$ and $a_{j'}.p$.
\end{itemize}
Finally, we re-index the arcs in $R_i$.  We quickly remark that this procedure can be seen as a ``twin-headed'' Graham scan.


The following lemmas show that this procedure is correct.

\begin{lemma}
If $q_{k1}\in H(p_i)$, then $R_{i-1}\subseteq H(p_i)$ and $R_i=R_{i-1}$.
\label{lemma:qk1-inside}
\end{lemma}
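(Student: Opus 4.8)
The plan is to pass to polar coordinates centered at $f$ and recast the inclusion as a single scalar extremal problem. Write $r_m=\operatorname{dist}(f,\ell_m)$ and let $\theta_m$ be the directional angle of $e_m$; for a point $v$ at polar angle $\phi=\arg(v-f)$ one has $v\in H(p_m)\iff|vf|\le\rho_m(\phi)$, where $\rho_m(\phi)=r_m/(1+\cos(\phi-\theta_m))$ is the polar equation of the parabola $p_m$ about its focus. Since $R_{i-1}$ is convex and contains $f$ it is star-shaped about $f$, so $\partial R_{i-1}$ is the graph $\rho=\rho^{\ast}(\phi):=\min_{m<i}\rho_m(\phi)$. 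Consequently $R_{i-1}\subseteq H(p_i)$ is equivalent to $\rho^{\ast}(\phi)\le\rho_i(\phi)$ for all $\phi$, i.e. to $g(v)\le r_i$ for every $v\in\partial R_{i-1}$, where $g(v):=|vf|\,(1+\cos(\arg(v-f)-\theta_i))=|vf|+\langle v-f,(\cos\theta_i,\sin\theta_i)\rangle$, so that $H(p_i)=\{g\le r_i\}$. Because $q_{k1}\in\partial R_{i-1}$, the hypothesis is exactly $g(q_{k1})\le r_i$; hence it suffices to prove that $g$ attains its maximum over $\partial R_{i-1}$ at $q_{k1}$.

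First I would show the maximum is attained at a vertex of $R_{i-1}$. Along any ray from $f$ the function $g$ is nondecreasing (its radial derivative is $1+\cos(\cdot)\ge0$), so $\max_{R_{i-1}}g=\max_{\partial R_{i-1}}g$. On a single parabolic arc $a_m$, substituting the polar equation of $p_{a_m}$ turns $g$ into $h_{a_m}(\phi)=r_{a_m}(1+\cos(\phi-\theta_i))/(1+\cos(\phi-\theta_{a_m}))$; a short computation shows $h_{a_m}'(\phi)=0$ only where $\cos\tfrac{\phi-\theta_i}{2}\cos\tfrac{\phi-\theta_{a_m}}{2}=0$, i.e. at $\phi\equiv\theta_i+\pi$ (where $h_{a_m}=0$, a minimum) or $\phi\equiv\theta_{a_m}+\pi$ (where $h_{a_m}\to\infty$). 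The latter direction is never attained by $a_m$, since the whole parabola $p_{a_m}$ lives in polar angles within $(\theta_{a_m}-\pi,\theta_{a_m}+\pi)$, so the only critical value of $g$ on any arc is a minimum; thus $g$ has no interior local maximum on any arc, and $\max_{\partial R_{i-1}}g$ is attained at some vertex $q_{j(j+1)}$.

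The remaining task is to identify that vertex as $q_{k1}$, and this is where the counterclockwise structure enters. Convexity of $P$ together with the fact that $e_{a_1},\dots,e_{a_k}$ were all processed before $e_i$ gives $\theta_{a_1}<\dots<\theta_{a_k}<\theta_i<\theta_{a_1}+2\pi$; since the arcs $a_1,\dots,a_k$ appear along $\partial R_{i-1}$ in this same cyclic order (the invariant underlying Algorithm~\ref{algo:safe-region}), each vertex $q_{j(j+1)}$ with $j<k$ has polar angle $\phi_j\in(\theta_{a_j},\theta_{a_{j+1}})$, whereas $q_{k1}$ has polar angle $\phi_k\in(\theta_{a_k},\theta_{a_1}+2\pi)$ — the unique vertex direction lying in the angular gap that also contains $\theta_i$. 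Every vertex of $R_{i-1}$ lies in $H(p_{a_1})\cap H(p_{a_k})$, so $g(q_{j(j+1)})\le\min\{h_{a_1}(\phi_j),h_{a_k}(\phi_j)\}$, while $q_{k1}$ lies on both $p_{a_1}$ and $p_{a_k}$, giving the exact value $g(q_{k1})=h_{a_1}(\phi_k)=h_{a_k}(\phi_k)$. Both $h_{a_1}$ and $h_{a_k}$ are unimodal on the circle of directions, with the common minimum at $\theta_i+\pi$ (value $0$) and blow-ups at $\theta_{a_1}+\pi$, resp. $\theta_{a_k}+\pi$; the locations of the $\phi_j$'s above then force $\min\{h_{a_1}(\phi_j),h_{a_k}(\phi_j)\}\le h_{a_1}(\phi_k)$ for each $j<k$, so $\max_{\partial R_{i-1}}g=g(q_{k1})\le r_i$, whence $R_{i-1}\subseteq H(p_i)$ and $R_i=R_{i-1}\cap H(p_i)=R_{i-1}$. (Equivalently one may walk $\partial R_{i-1}$ vertex by vertex: consecutive vertices are the two ends of a single arc $a_l$, on which $g$ moves monotonically away from the direction $\theta_i+\pi$, and tracking these monotonicities around the loop shows $g$ climbs to its cyclic maximum precisely at $q_{k1}$.)

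The main obstacle is exactly this last comparison. Although all the $h_a$ share the minimum direction $\theta_i+\pi$, deciding on which side of that direction a given $\phi_j$ falls — and hence which of $h_{a_1},h_{a_k}$ is the one that stays below $g(q_{k1})$ — depends on whether the angular span $(\theta_{a_1},\theta_{a_k})$ of the active arcs, and the offset $\theta_i-\theta_{a_k}$, exceed $\pi$. So a handful of sub-cases, each an elementary inequality among the angles $\theta_{a_1},\theta_{a_k},\theta_i,\phi_j,\phi_k$, must be dispatched; the polar reformulation and the per-arc derivative computation are otherwise routine.
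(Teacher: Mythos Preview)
Your polar-coordinate reformulation is elegant and genuinely different from the paper's approach, which works directly with the left/central/right arc decomposition of $p_{a_1}$ and $p_{a_k}$ (with respect to each other) to show the stronger inclusion $H(p_{a_1})\cap H(p_{a_k})\subseteq H(p_i)$. Your reduction to maximizing $g$ over $\partial R_{i-1}$, and the argument that this maximum is attained at a vertex, are both correct and clean---indeed on each arc $a_l$ one has $g(v)=d(v,\ell_{m_l})-d(v,\ell_i)+r_i$, an affine function of $v$, so strict convexity of the parabolic arc already forces the maximum to an endpoint.

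The gap is in your second step. The claim that the vertex $q_{j(j+1)}$ has polar angle $\phi_j\in(\theta_{a_j},\theta_{a_{j+1}})$ is false in general. Take $k=2$ with $\theta_{a_1}=0$, $r_{a_1}=1$ and $\theta_{a_2}=\pi/2$, $r_{a_2}=10$ (two edges of a convex polygon at very different distances from $f$). Solving $\rho_1(\phi)=\rho_2(\phi)$ gives $\sin\phi-10\cos\phi=9$, with solutions $\phi\approx148^\circ$ and $\phi\approx201^\circ$; traversing $\partial(H(p_1)\cap H(p_2))$ counter-clockwise one finds $q_{12}$ at $\phi\approx148^\circ\notin(0^\circ,90^\circ)$. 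Thus the vertex angles $\phi_j$ need \emph{not} interleave with the directional angles $\theta_{a_j}$, and your identification of $q_{k1}$ as ``the unique vertex direction in the angular gap containing $\theta_i$'' collapses (in this example \emph{both} vertices lie in that gap). The unimodality comparison $\min\{h_{a_1}(\phi_j),h_{a_k}(\phi_j)\}\le h_{a_1}(\phi_k)$ that follows rests on this interleaving, and the alternative ``walk $\partial R_{i-1}$ vertex by vertex'' sketch does not supply the missing ordering either: the sign of $g'$ on arc $a_l$ equals that of $\cos((\phi-\theta_i)/2)$ only after choosing the branch $\phi\in(\theta_{a_l}-\pi,\theta_{a_l}+\pi)$, and this branch changes from arc to arc, so one cannot read off a single cyclic maximum without further argument.

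What does survive is the reduction $\max_{R_{i-1}}g\le\max_{H(p_{a_1})\cap H(p_{a_k})}g$ via radial monotonicity, together with $q_{k1}\in\partial R_{i-1}$; so everything comes down to the three-parabola inequality $g(q_{k1})\ge g(q_{1k})$. This is precisely the content the paper extracts by locating $q_{i1}$ on the left arc of $p_{a_1}$ and $q_{ki}$ on the right arc of $p_{a_k}$ using the counter-clockwise orientation of the $q_{\bullet\bullet}$ labels---your promised ``handful of sub-cases'' would have to reproduce that argument, and as written it is not carried out.
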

\begin{proof}
Since $R_{i-1}\subseteq H(p_1)\cap H(p_k)$, our goal is to show that $H(p_1)\cap H(p_k) \subseteq H(p_i)$ in this case.

Consider the intersection of $p_1$ and $p_k$.  Recall that the two intersection points $q_{k1}$ and $q_{1k}$ partition both parabolas into 
their left arcs, central arcs and right arcs.  Let's call them $a^\ell_k,a^c_k,a^r_k$ and  $a^\ell_1,a^c_1,a^r_1$.

We now consider the intersection of $p_1$ and $p_i$.
We show that $q_{i1}$, the intersection point of $p_i$ and $p_1$, is on the left arc $a^\ell_1$ of $p_1$.
To see this, we starts by rotate the plane such that $p_i$ is in the upright form.
Then we find a region $r = H(p_1) \cap H(p_i)$.
Since $q_{k1} \in p_1$ and also $q_{k1}\in H(p_i)$, thus $q_{k1}$ in on the boundary of $H(p_1)\cap H(p_i)=r$.
Again, since $q_{k1}\in p_1$ and also on the boundary of $r$, traversing from $q_{k1}$ on the boundary of $r$ clockwisely w.r.t. $f$
would reach $q_{i1}$, by definition of $q_{i1}$, as claimed.


Using the same argument, we can show that $q_{ki}$ is in the right arc $a^r_k$ of $p_k$.

Using $q_{i1}$ and $q_{ki}$, we partition $p_i$ into 3 arcs: $a_1, a_2,$ and $a_3$ so that $a_1$ is an unbounded curve with $q_{i1}$ as its end point,
$a_2$ is a bounded part with $q_{i1}$ and $q_{ki}$ as their end points, and finally $a_3$ is an unbounded curve with $q_{ki}$ as its end point.  (See Figure~\ref{fig:qk1-inside})

Using the structure from Lemma~\ref{lemma:angle-bisector}, we know that $a_1\cup a_2$ intersects $p_k$ at only $q_{ki}\in a^r_k$.
Thus, $p_i$ does not intersect $a^c_k\cup a^\ell_k$.

Also, we know that $a_2\cup a_3$ intersects $p_1$ at only $q_{i1}\in a^\ell_1$, implying that $p_i$ does not intersect $a^r_1\cup a^c_1$.

We can conclude that $p_i$ does not intersect with $R_{i-1}$, because $R_{i-1}$ lies between the unbounded curves $a^c_k\cup a^\ell_k$ and $a^r_1\cup a^c_1$.
\end{proof}

We also have a simple contraposition.

\begin{corollary}
If $R_{i-1}\not\subseteq H(p_i)$, then $q_{k1}\not\in H(p_i)$.
\label{lemma:not-inside-qk1-out}
\end{corollary}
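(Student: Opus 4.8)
The plan is to observe that the statement is nothing more than the contrapositive of Lemma~\ref{lemma:qk1-inside}, so no new geometric work is required. Lemma~\ref{lemma:qk1-inside} asserts the implication: if $q_{k1}\in H(p_i)$ then $R_{i-1}\subseteq H(p_i)$ (and consequently $R_i=R_{i-1}$). Negating the conclusion and the hypothesis immediately yields the corollary.

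Concretely, I would argue by contradiction. Suppose $R_{i-1}\not\subseteq H(p_i)$ but, contrary to the claim, $q_{k1}\in H(p_i)$. Applying Lemma~\ref{lemma:qk1-inside} to the point $q_{k1}=a_k.r=a_1.\ell$ gives $R_{i-1}\subseteq H(p_i)$, which directly contradicts the assumption $R_{i-1}\not\subseteq H(p_i)$. Hence $q_{k1}\notin H(p_i)$, as desired. The only thing to check carefully is that the quantities named in this corollary ($R_{i-1}$, $H(p_i)$, and the intersection point $q_{k1}$ of the ``first'' and ``last'' parabola arcs of $R_{i-1}$) are exactly the ones appearing in Lemma~\ref{lemma:qk1-inside}; this is a matter of bookkeeping with the cyclic arc list $a_1,\dots,a_k$ representing $R_{i-1}$, and it holds by construction.

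I do not anticipate any obstacle here: the content of the corollary is purely logical, and all the geometric substance was already discharged in the proof of Lemma~\ref{lemma:qk1-inside} via the arc-decomposition argument and Lemma~\ref{lemma:angle-bisector}. The corollary is stated separately only because it is the form in which the algorithm of Section~\ref{sect:algo} invokes the lemma, namely as the guarantee that whenever $p_i$ actually clips or eclipses part of the current region, the test point $q_{k1}$ will fall outside $H(p_i)$ and thereby trigger the scanning loops.
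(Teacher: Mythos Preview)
Your proposal is correct and matches the paper's approach exactly: the paper simply notes ``We also have a simple contraposition'' before stating the corollary, without giving a separate proof. Your observation that this is the contrapositive of Lemma~\ref{lemma:qk1-inside} is precisely what is intended.
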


Let $\hat{R}=\bigcap_{j=1}^i H(p_j)$ be the correct updated solution, we would like to show that $R_i$ constructed above equals $\hat{R}$.  We remark that the $i$-th parabola $p_i$ 
corresponds to edge $e_i$ that comes counter-clockwisely after all other edges that contributes to $R_{i-1}$.

\begin{lemma}
\label{lemma:consecutive-arcs}
If $R_{i-1}\not\subseteq H(p_i)$, the arcs on $R_{i-1}$'s boundary which do not belong to $\hat{R}$ form a consecutive sequences
\[
a_j,a_{j+1},\ldots,a_k,a_1,a_2,\ldots,a_{j'}.
\]
\end{lemma}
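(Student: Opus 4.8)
The plan is to describe geometrically which arcs of $\partial R_{i-1}$ are destroyed by intersecting with $H(p_i)$, and to show that this collection is a single connected block in the cyclic order that straddles the ``seam'' vertex $q_{k1}=a_k.r=a_1.\ell$. I will use throughout that $H(p_i)$, being the side of the parabola $p_i$ that contains the focus $f$, is convex; hence $\hat R=R_{i-1}\cap H(p_i)$ is convex as well. It is nonempty, since $f$ lies in the interior of every $H(p_j)$ and hence in the interiors of $R_{i-1}$ and of $\hat R$, and it is a proper subset of $R_{i-1}$ by the hypothesis $R_{i-1}\not\subseteq H(p_i)$.

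First I would split $\partial R_{i-1}$ into a \emph{surviving} portion $S=\partial R_{i-1}\cap H(p_i)$ and a \emph{removed} portion $\rho=\partial R_{i-1}\setminus H(p_i)$, and establish that $S$ is a closed connected sub-arc and $\rho$ an open connected sub-arc, both nonempty. The parabolas that contribute to $\hat R$ are semi-confocal with pairwise distinct directional angles --- distinct edges of a convex polygon have distinct outer normals --- so Lemma~\ref{lemma:one-arc} applies to $\hat R$: the parabola $p_i$ touches $\hat R$ in exactly one arc, call it $a_i$. This arc is nonempty, for otherwise $\hat R\subseteq\mathrm{int}(H(p_i))$, which together with the connectedness of $R_{i-1}$ and $f\in\hat R$ forces $R_{i-1}\subseteq H(p_i)$, a contradiction. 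Every point of $\partial\hat R$ outside the relative interior of $a_i$ lies on $\partial R_{i-1}$ and in $H(p_i)$, and every point of $\partial R_{i-1}\cap H(p_i)$ lies on $\partial\hat R$ and outside $\mathrm{int}(a_i)$ (if it lay in $\mathrm{int}(a_i)\subseteq p_i$ then $\partial R_{i-1}$ would coincide with $p_i$ near that point, impossible since distinct semi-confocal parabolas never share an arc). Hence $S=\partial\hat R\setminus\mathrm{int}(a_i)$, which is a closed connected sub-arc of the Jordan curve $\partial R_{i-1}$, and $\rho=\partial R_{i-1}\setminus S$ is an open connected sub-arc. The two endpoints of $a_i$ are distinct (a parabola is not a closed curve) and both lie in $S$, so $S\neq\emptyset$; and $\rho\neq\emptyset$ because extending a ray from $f$ through any point of $R_{i-1}\setminus H(p_i)$ reaches a point of $\partial R_{i-1}$ that is still outside $H(p_i)$.

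Then I would read off the answer. An arc $a_m$ of $R_{i-1}$ contributes unchanged to $\partial\hat R$ precisely when $a_m\subseteq S$, so $a_m$ fails to belong to $\hat R$ (it is clipped or deleted by the update) precisely when $a_m\cap\rho\neq\emptyset$. Since the arcs $a_1,\dots,a_k$ tile $\partial R_{i-1}$ in cyclic order and $\rho$ is a single connected sub-arc, the set of indices $m$ with $a_m\cap\rho\neq\emptyset$ is a consecutive cyclic block. Finally, Corollary~\ref{lemma:not-inside-qk1-out} gives $q_{k1}\notin H(p_i)$, i.e.\ $q_{k1}\in\rho$; as $\rho$ is open and $q_{k1}=a_k.r=a_1.\ell$ is the common endpoint of $a_k$ and $a_1$, both of these arcs meet $\rho$. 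Therefore the block contains $a_k$ and $a_1$, hence wraps across the seam and has exactly the claimed form $a_j,a_{j+1},\dots,a_k,a_1,a_2,\dots,a_{j'}$.

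The one step that needs genuine work is the connectivity of $\rho$: for an arbitrary convex set in place of $H(p_i)$ the complement $\partial R_{i-1}\setminus H(p_i)$ can break into several pieces, so this is not a formal consequence of convexity --- it uses the parabola structure essentially, through Lemma~\ref{lemma:one-arc} (which itself rests on the conic-section picture behind Lemma~\ref{lemma:angle-bisector}). Beyond that, one should verify a handful of degeneracies --- $p_i$ passing exactly through a vertex $a_m.r$ of $R_{i-1}$, or a crossing that is actually a tangency --- but in each case the conclusion that the destroyed arcs form a consecutive block containing the seam is unchanged. An alternative route, closer in spirit to the proof of Lemma~\ref{lemma:qk1-inside}, is to walk along the neighbours of $p_i$ one arc at a time and track the intersection points $q_{mi}$; the topological argument above seems cleaner.
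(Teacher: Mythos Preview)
Your proposal is correct and follows essentially the same approach as the paper: both arguments invoke Lemma~\ref{lemma:one-arc} to conclude that $p_i$ contributes a single arc to $\partial\hat R$ (so the removed portion of $\partial R_{i-1}$ is a single connected block), and then use Corollary~\ref{lemma:not-inside-qk1-out} to pin that block across the seam vertex $q_{k1}$. Your write-up is considerably more detailed---explicitly identifying $S=\partial\hat R\setminus\mathrm{relint}(a_i)$, verifying nonemptiness, and flagging degeneracies---whereas the paper argues the consecutiveness by the one-line contrapositive ``otherwise $p_i$ would touch more than one arc of $\hat R$''; but the substance is the same.
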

\begin{proof}
Lemma~\ref{lemma:one-arc} ensures that if $p_i$ intersects the boundary of $R_{i-1}$, $p_i$ touches at most one arc of $\hat{R}$, the resulting of the intersection of $H(p_i)$ and $R_{i-1}$.  This implies that the arcs of $R_{i-1}$ not belonging to forms a (circular) consecutive sequence.  To see this, assume otherwise and note that in that case $p_i$ would touches more than one arcs of $\hat{R}$.

Our procedure finds the consecutive sequence starting at $q_{k1}$, the intersection of $p_k$ and $p_1$, then iterates through other consecutive points.  Thus, the procedure is correct if the starting point is correct, i.e., we start at some intersection point outside $\hat{R}$.  This is indeed the case because Corollary~\ref{lemma:not-inside-qk1-out} guarantees that when $R_{i-1}\not\subseteq H(p_i)$, $q_{k1}$ is outside $\hat{R}$.
\end{proof}

We conclude with our main correctness theorem.

\begin{theorem}
Our updating procedure is correct, i.e.,
$R_{i+1}=\hat{R}$, and the algorithm computes the safe region in linear time.
\end{theorem}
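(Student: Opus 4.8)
The plan is to prove two things by separate arguments: correctness, i.e.\ that after the iteration handling $p_i$ the stored region is exactly $\bigcap_{j\le i}H(p_j)$ (which is what ``$R_{i+1}=\hat R$'' asserts), and a running time of $O(n)$ via an amortisation argument in the style of Graham's scan. Correctness goes by induction on $i$. The base case $i=2$ is immediate: $R_2=H(p_1)\cap H(p_2)$ is built directly and Lemma~\ref{lemma:angle-bisector} certifies that its boundary consists of exactly the two arcs stored, which also fixes the counter-clockwise arc-list convention $a_j.\ell=q_{(j-1)j}$, $a_j.r=q_{j(j+1)}$ that the induction maintains.

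For the inductive step, suppose $R_{i-1}=\bigcap_{j<i}H(p_j)$ is stored correctly as the cyclic list $a_1,\dots,a_k$, and write $\hat R=R_{i-1}\cap H(p_i)$. If $R_{i-1}\subseteq H(p_i)$ then $q_{k1}=a_1.\ell=a_k.r\in H(p_i)$, and the argument inside the proof of Lemma~\ref{lemma:qk1-inside} puts $q_{iH}$ on the left arc of $p_H=p_1$ past $\ell_H$ and $q_{Ti}$ on the right arc of $p_T=p_k$ past $r_T$, so the skip test fires and $R$ is correctly left unchanged; conversely, when $R_{i-1}\not\subseteq H(p_i)$, Corollary~\ref{lemma:not-inside-qk1-out} puts $q_{k1}$ strictly outside $H(p_i)$, which (by the conic picture behind Lemma~\ref{lemma:angle-bisector}) is incompatible with both arc-membership conditions holding, so the skip test does not fire. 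In the non-skip case Lemma~\ref{lemma:consecutive-arcs} says the arcs of $\partial R_{i-1}$ not on $\partial\hat R$ form a consecutive block $a_j,\dots,a_k,a_1,\dots,a_{j'}$ straddling the seam at $q_{k1}$; the two \textbf{while} loops walk outward from that seam and, by the same consecutiveness, delete exactly this block and stop at the first surviving arc on each side. To finish I argue $R_i=\hat R$: the untouched surviving arcs are literally the pieces of $\partial R_{i-1}\cap\partial\hat R$; \textsc{UpdateHead}/\textsc{UpdateTail} merely shorten the two extreme survivors to the new endpoints $q_{iH}=a_j.r$ and $q_{Ti}=a_{j'}.\ell$; and \textsc{AppendTail} inserts the unique arc of $p_i$ on $\partial\hat R$ (uniqueness from Lemma~\ref{lemma:one-arc}), whose endpoints are exactly those two intersection points, so the cyclic boundary closes up. The degenerate case $j=j'$ is identical, the two shortened copies of the single survivor meeting $p_i$ at both of $p_i$'s intersection points with $a_j.p$.

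For the running time, the main loop runs $n-2$ times, and every geometric primitive used --- building $\textsc{Parabola}(f,e_i)$, intersecting two parabolas, testing membership of a point in an arc --- is $O(1)$ by the discussion following Lemma~\ref{lemma:angle-bisector}, so an iteration costs $O(1)$ outside its \textbf{while} loops. Take the potential $\Phi$ to be the current number of arcs of $R$: each iteration appends at most one new arc and creates no others, so it raises $\Phi$ by at most $1$ outside the loops, while each \textbf{while}-loop body performs one \textsc{RemoveHead}/\textsc{RemoveTail} and lowers $\Phi$ by $1$. Since $\Phi$ starts at $2$ and never drops below $2$, the total number of removals over the whole run is at most $2+(n-2)=O(n)$; hence the total \textbf{while}-loop work is $O(n)$ and the algorithm runs in $O(n)$ time.

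I expect the main obstacle to be the step that ties the concrete Boolean tests in the pseudocode to the abstract lemmas: showing that the skip test fires exactly when $R_{i-1}\subseteq H(p_i)$, and that each \textbf{while} loop halts precisely at the first surviving arc with the correctly recomputed endpoint under the counter-clockwise $q_{ij}$ convention. This rests on a monotonicity fact --- once an arc of $R_{i-1}$ survives, every arc farther from the seam $q_{k1}$ along $\partial R_{i-1}$ survives too --- which is essentially Lemma~\ref{lemma:consecutive-arcs} restated, but it has to be combined carefully with the left/central/right arc bookkeeping for a single parabola to turn ``survives'' into the conditions actually evaluated.
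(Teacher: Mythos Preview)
Your proposal is correct and follows essentially the same route as the paper: induction on $i$ using Lemma~\ref{lemma:qk1-inside} and Corollary~\ref{lemma:not-inside-qk1-out} for the skip case, Lemma~\ref{lemma:consecutive-arcs} (backed by Lemma~\ref{lemma:one-arc}) for the deletion phase, and an amortised counting of arc insertions versus removals for the $O(n)$ bound. Your potential-function formulation of the running time is just a repackaging of the paper's ``at most $n$ inserted, hence at most $n$ deleted'' argument, and the obstacle you flag---tying the pseudocode's arc-membership tests to the $q_{k1}\in H(p_i)$ criterion---is real but is glossed over equally in the paper's own proof, so you are if anything being more careful than the original.
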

\begin{proof}
Regarding the updating procedure, we deals with 3 possible cases.
Lemma~\ref{lemma:qk1-inside} ensures that our condition for case 1 is correct.  In other cases, Lemma~\ref{lemma:consecutive-arcs} shows that the procedure for deleting arcs is correct.  By induction on $n$, the algorithm is thus produces the required safe region.

To analyze the running time, we first note that, except the two inner while loops, for each $i$, the algorithm runs in $O(1)$ time.  To account for the running time of the inner while loops, observe that each iteration of the loop removes one parabola from the list.  Since at most $n$ parabolas are inserted in the list, the deletion can take place at most $n$ times, implying the total running time of $O(n)$ for the loops.
\end{proof}

\newcommand{\interior}{\mathrm{int}}

\section{The number of arcs of the safe region}
\label{sect:skeleton}

In this section, we consider the complexity of the boundary of the safe region;
in other words, we count the exact number of arcs of the safe region.
From previous sections, we derive that a side of the safe region is a parabolic arc with point $f$ as its focus.
We also see that some edge of $P$ may not contribute to the resulting safe region, i.e., a parabola associated with it does not touch the safe region.  It is natural to ask for the number of arcs of the safe region.

Assuming that the polygon $P$ is fixed, the number of arcs of the safe region depends on the focus $f$.
We denote explicitly by $R_f$ a safe region with point $f$ as its focus.  
As in~\cite{AkitayaBDH021}, this section analyzes the number of arcs of safe region $R_f$, i.e., $|A(R_f)|$,
where $A(R_f)$ is the set of arcs of $R_f$.
Figure~\ref{fig:nos-arcs-triangle}(a) shows two safe regions $R_{f_2}$ with query point $f_2$ and $R_{f_3}$ with query point $f_3$.

Akitaya {\em et. al.}~\cite{AkitayaBDH021} consider the same problem for the case where each side of $P$ is folded onto a line.  
They show that straight skeleton of $P$ plays an important role in determining the number of sides of the resulting region.  
It is true for our case as well.  
As an example, Figure~\ref{fig:nos-arcs-triangle}(b) shows an inscribed circle $C$ which can be determined using straight skeleton
and two safe regions shown previously.  We remark that $f_2\not\in C$ but $f_3\in C$.

We start by defining useful notations related to straight skeleton and event circles.
A {\em straight skeleton}~\cite{AicholzerA-cocoon96-skeletons} of a polygon $P$, denoted by $S(P)$, is a subset of $P$ such that for each point $u \in S(P)$, there exist at least two points on $\delta P$ with the same distance to $u$.
More intuitively, we may see the skeleton as a Voronoi diagram of line segments where each site is an edge of the polygon.
The straight skeleton $S(P)$ partitions $P$ into regions, referred to as {\em faces}.  
Thus, under the Voronoi interpretation, each face is bounded by exactly one polygon edge as other edges of $S(P)$.
We call a face that is bounded by polygon edge $e_i$ as face $F_i$.  
We also note that a face is also a convex polygon.  
See Figure~\ref{fig:skeleton} for an illustration.

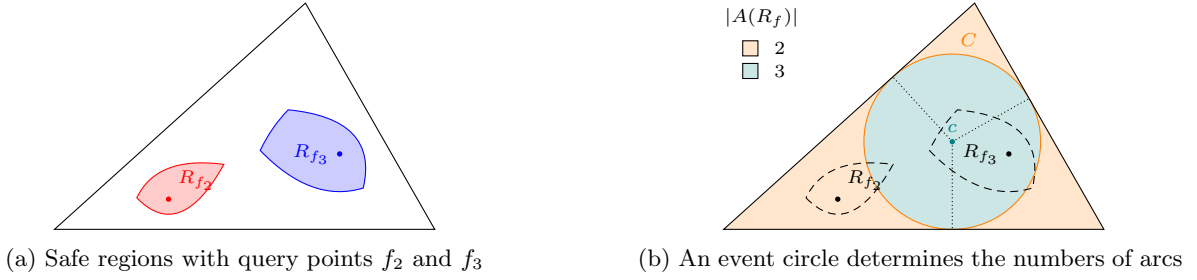
\begin{figure}
  \centering
  \begin{subfigure}[t]{0.45\textwidth}
    \centering
    \begin{tikzpicture}
      \draw (0.000,0.000) -- (5.000,0.000) -- (3.300,3.000) -- cycle;
      \filldraw[blue!20, draw=blue]
        (2.705,1.046) .. controls (3.157,0.574) and (3.608,0.408) ..
        (4.060,0.548) .. controls (4.201,1.160) and (3.873,1.506) ..
        (3.074,1.585) .. controls (2.925,1.433) and (2.803,1.254) .. cycle;
      \fill[blue] (3.750,1.000) circle (1pt) node[anchor=east] {$\scriptstyle R_{f_3}$};
      \filldraw[red!20, draw=red]
        (1.082,0.418) .. controls (1.464,0.020) and (1.846,0.167) ..
        (2.227,0.861) .. controls (1.639,0.940) and (1.258,0.793) .. cycle;
      \fill[red] (1.500,0.400) circle (1pt) node[anchor=south west] {$\scriptstyle R_{f_2}$};
    \end{tikzpicture}
    \caption{Safe regions with query points $f_2$ and $f_3$}
  \end{subfigure}
  \hfill
  \begin{subfigure}[t]{0.45\textwidth}
    \centering
    \begin{tikzpicture}
      \fill[orange!20] (0.000,0.000) -- (5.000,0.000) -- (3.300,3.000) -- cycle;
      \draw[orange, fill=teal!20] (3.006,0) arc (-90:270:1.162) node[pos=0.47, anchor=south] {$\scriptstyle C$};
      \draw[densely dotted] (3.006,1.162) -- (3.006,0.000);
      \draw[densely dotted] (3.006,1.162) -- (2.224,2.022);
      \draw[densely dotted] (3.006,1.162) -- (4.017,1.735);
      \fill[teal] (3.006,1.162) circle (1pt) node[anchor=south] {$\scriptstyle c$};
      \draw (0.000,0.000) -- (5.000,0.000) -- (3.300,3.000) -- cycle;
      \draw[densely dashed]
        (2.705,1.046) .. controls (3.157,0.574) and (3.608,0.408) ..
        (4.060,0.548) .. controls (4.201,1.160) and (3.873,1.506) ..
        (3.074,1.585) .. controls (2.925,1.433) and (2.803,1.254) .. cycle;
      \fill (3.750,1.000) circle (1pt) node[anchor=east] {$\scriptstyle R_{f_3}$};
      \draw[densely dashed]
        (1.082,0.418) .. controls (1.464,0.020) and (1.846,0.167) ..
        (2.227,0.861) .. controls (1.639,0.940) and (1.258,0.793) .. cycle;
      \fill (1.500,0.400) circle (1pt) node[anchor=south west] {$\scriptstyle R_{f_2}$};
      \draw (0.5,2.8) node {$\scriptstyle |A(R_f)|$};
      \draw[fill=orange!20] (0.25,2.3) rectangle ++(0.2,0.2) node[pos=0.5, anchor=west] {$\;\;\scriptstyle 2$};
      \draw[fill=teal!20] (0.25,2.0) rectangle ++(0.2,0.2) node[pos=0.5, anchor=west] {$\;\;\scriptstyle 3$};
    \end{tikzpicture}
    \caption{An event circle determines the numbers of arcs}
  \end{subfigure}
  \caption{Safe regions with 2 and 3 parabola arcs}
  \label{fig:nos-arcs-triangle}
\end{figure}

The skeleton may be viewed as a tree, where each non-leaf node ensures at least three equidistant points on $\delta P$.
A non-leaf node of $S(P)$ is referred to as an {\em event point}.
A circle centered at an event point and tangent to the nearest edge of the polygon is called an {\em event circle} of $S(P)$.
Let $\mathcal{C}$ be the set of all event circles of $S(P)$, 
and let $\mathcal{C}_e \subseteq \mathcal{C}$ be a set of event circles tangent to edge $e$.
We also denote by $\interior(C)$ an interior of circle $C$, i.e., the set $\{ (x,y) : (x-x_0)^2+(y-y_0)^2<r^2\}$ for a circle with center $(x_0,y_0)$ and radius $r$.  

The goal of this section is to show that, under the fixed polygon $P$, 
the structure of $|A(R_f)|$ is governed by event circles of straight skeleton $S(P)$ of $P$.

We start by analyzing the case when the safe region intersects with the skeleton faces.
The following lemma directly follows from the Voronoi interpretation of the straight skeleton.

\begin{lemma}
\label{lemma:face-tangent-edge}
For each event circle $C$ with event point $c$, $c$ is adjacent to face $F_i$, 
if and only if its corresponding polygon edge $e_i$ is tangent to $C$.
\label{lemma:adjacent-face-tangent-edge}
\end{lemma}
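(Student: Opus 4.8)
The plan is to read this off directly from the Voronoi-diagram interpretation of the straight skeleton. Under that interpretation, each face $F_i$ is the set of points in $P$ whose nearest edge of $P$ is $e_i$ (with ties on the boundary of the face), where distance to an edge means distance to the line segment. An event point $c$ is a node of $S(P)$, so by definition it is equidistant from at least three edges of $P$, and $c$ lies on the closure of a face $F_j$ exactly when $e_j$ is one of the edges realizing that nearest distance $\rho$, where $\rho$ is precisely the radius of the event circle $C$ centered at $c$.

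First I would make the ``nearest distance'' statement precise: since $c$ lies inside $P$ and $C$ is the largest circle centered at $c$ contained in $P$ that touches $\delta P$, the radius $\rho$ of $C$ equals $\min_k \mathrm{dist}(c, e_k)$, and the set of indices $k$ achieving this minimum is exactly the set of faces whose closure contains $c$ (this is the defining property of the straight-skeleton-as-Voronoi-diagram decomposition — a node of $S(P)$ is a Voronoi vertex, sitting on the boundary of each Voronoi cell of the sites it is equidistant-and-nearest to). Then I would argue the two directions. For the forward direction: if $c$ is adjacent to $F_i$, then $e_i$ realizes the minimum distance $\rho = \mathrm{dist}(c,e_i)$, so the circle of radius $\rho$ centered at $c$ — which is $C$ — touches $e_i$; moreover the touching point is the orthogonal projection of $c$ onto the line through $e_i$, and since $c$ is in the closure of $F_i\subseteq P$ this projection point lies on the segment $e_i$ itself, so $C$ is genuinely tangent to the edge $e_i$, not merely to its supporting line. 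For the converse: if $C$ is tangent to $e_i$, then $\mathrm{dist}(c, e_i) \le \rho$; but $\rho = \min_k \mathrm{dist}(c,e_k) \le \mathrm{dist}(c,e_i)$, so equality holds and $e_i$ is a nearest edge, which by the Voronoi characterization puts $c$ on the closure of $F_i$, i.e., $c$ is adjacent to $F_i$.

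The only subtlety — and the one place I would be careful rather than hand-wavy — is distinguishing tangency to the edge segment $e_i$ from tangency to its supporting line, since a convex polygon's Voronoi-of-segments cells can, near reflex-free corners, have the nearest ``feature'' be a vertex rather than an edge interior. For a convex polygon, though, every straight-skeleton node's nearest boundary features are edges (the bisector structure is generated by the edges), and an event circle centered at a skeleton node and tangent to $\delta P$ meets $\delta P$ only at interior points of edges; I would state this as the content flagged by ``directly follows from the Voronoi interpretation'' and cite the straight-skeleton references already in the paper (\cite{AicholzerA-cocoon96-skeletons}). I do not expect a real obstacle here — the lemma is essentially a restatement of the defining property of the decomposition — so the proof will be short, amounting to: (i) recall $\rho = \min_k \mathrm{dist}(c,e_k)$ and that adjacency to $F_k$ is equivalent to $e_k$ attaining this minimum, and (ii) observe that $e_k$ attaining the minimum is equivalent to $C$ being tangent to $e_k$.
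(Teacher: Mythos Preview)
Your proposal is correct and takes essentially the same approach as the paper: the paper does not give a written proof at all, stating only that the lemma ``directly follows from the Voronoi interpretation of the straight skeleton,'' which is precisely the observation you unpack. Your write-up is simply a more explicit version of what the paper leaves implicit, and your care about segment-versus-line tangency is appropriate but, as you note, not an obstacle in the convex case.
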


The following two lemmas provide basic properties for our main theorem in this section.

\begin{lemma}
For a particular focus $f$, $\interior(R_f)$ intersects with skeleton face $F_i$ adjacent to edge $e_i$, 
iff the associated parabola $p_i$ is part of the arcs of $R_f$.
\label{lemma:arc-appears-if-intersect-face}
\end{lemma}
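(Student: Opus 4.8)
The plan is to prove both directions by connecting two geometric facts: (a) a point $v$ lies on the boundary arc of $R_f$ contributed by $p_i$ exactly when $v$ is equidistant from $f$ and from the line through $e_i$, and this common distance is $\le$ the distance from $v$ to every other line through $e_j$; and (b) inside the face $F_i$, the closest edge of $P$ to any point is $e_i$ itself, so the distance-to-$\delta P$ function there equals the distance to the line supporting $e_i$ (using convexity of $P$, so that the foot of the perpendicular from an interior point of $F_i$ actually lands on the segment $e_i$). I would first restate, from Section~\ref{sect:prelim}, that $H(p_j)=\{v : |\overline{vf}|\le d(v,\ell_j)\}$ where $\ell_j$ is the line extending $e_j$, so that $R_f=\bigcap_j H(p_j)=\{v : |\overline{vf}|\le \min_j d(v,\ell_j)\}$. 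For a point $v\in \interior(F_i)$ we have $\min_j d(v,\ell_j)=d(v,\ell_i)=d(v,\delta P)$ by the Voronoi characterization of the straight skeleton (Lemma~\ref{lemma:face-tangent-edge} and the surrounding discussion).

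For the "only if" direction, suppose $p_i$ contributes an arc $a$ to $R_f$, and pick a point $w$ in the relative interior of $a$. Then $w\in p_i$, so $|\overline{wf}|=d(w,\ell_i)$, and $w\in \partial R_f$, so $|\overline{wf}|\le d(w,\ell_j)$ for all $j$; hence $i=\arg\min_j d(w,\ell_j)$, which means $w$ lies in the (closed) face $F_i$. Now I want an \emph{interior} intersection point, so I would perturb: since $a$ has positive length and $f$ is strictly interior, I can move slightly from $w$ into the interior of $R_f$ while staying strictly closer to $f$ than to any $\ell_j$ with $j\ne i$ being strictly farther — more carefully, the set of $j$ achieving the minimum at $w$ is the set of faces whose closure contains $w$; I pick $w$ not on a skeleton edge (possible because only finitely many points of $a$ meet $S(P)$), so $F_i$ is the unique face containing $w$, $w\in\interior(F_i)$, and then a sufficiently small ball around $w$ lies in $\interior(F_i)$ and meets $\interior(R_f)$ (it meets $R_f$ on both sides of the arc, and the side toward $f$ is interior). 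That ball witnesses $\interior(R_f)\cap F_i\ne\emptyset$.

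For the "if" direction, suppose $\interior(R_f)$ meets $F_i$; since both sets are two-dimensional we may pick $v\in\interior(R_f)\cap\interior(F_i)$ with $v\notin S(P)$. Then $|\overline{vf}|<\min_j d(v,\ell_j)=d(v,\ell_i)$, so $v$ is strictly inside $H(p_i)$, i.e. strictly on the $f$-side of $p_i$. Consider the ray from $f$ through $v$ and continue to the point $v^\ast$ where it first exits $R_f$; by convexity of $R_f$ (which follows from Lemma~\ref{lemma:angle-bisector} and Lemma~\ref{lemma:one-arc}) such a point exists and lies on $\partial R_f$, hence on some arc $p_m$. I claim $m=i$: along this ray, moving away from $f$, the quantity $|\overline{xf}|$ strictly increases while $x$ stays in $\interior(F_i)$ for a while, so $d(x,\ell_i)$ is the relevant minimum near $v$; if the ray leaves $F_i$ before hitting $\partial R_f$ it crosses a skeleton edge into an adjacent face $F_{i'}$, but on that whole portion we still have $|\overline{xf}|\le d(x,\delta P)=\min_j d(x,\ell_j)$, contradicting that $x$ is still interior unless... — rather than belabor this, the cleaner argument is: the function $g(x)=|\overline{xf}|-d(x,\delta P)$ is negative at $v$ and, since $R_f=\{g\le 0\}$, it is $0$ at $v^\ast$; in a neighborhood of the exit point the active edge is whatever face $v^\ast$ lies in, say $F_m$, and $v^\ast\in p_m\cap\partial R_f$ shows $p_m$ is an arc of $R_f$. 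To finish we must exhibit an arc specifically from $p_i$: here I would instead argue directly that $p_i\cap \closure{F_i}$ already lies on $\partial R_f$ — any point $x\in p_i\cap \interior(F_i)$ satisfies $|\overline{xf}|=d(x,\ell_i)=\min_j d(x,\ell_j)$, so $x\in\partial R_f$ and $x$ lies on the $p_i$ arc; and such points exist because $p_i$ separates $v\in\interior(F_i)\cap\interior(R_f)$ (where $x$ is on the $f$-side of $p_i$) from points of $F_i$ far from $f$ (on the far side of $p_i$, since $F_i$ is unbounded-ish enough relative to the bounded $R_f$, or more precisely since $R_f$ is bounded there is a point of $F_i$ outside $H(p_i)$), so $p_i$ must cross $\interior(F_i)$, and any such crossing point is on $\partial R_f$.

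\textbf{Main obstacle.} The delicate point is handling the boundary cases where the witnessing point of $p_i$ in $F_i$ lies on a skeleton edge (so several $d(\cdot,\ell_j)$ tie) or where the arc of $R_f$ degenerates to a single point; the fix in both directions is that these bad configurations are measure-zero (finitely many skeleton edges, an arc of positive length meets them finitely often), so a generic point in the relevant relatively open set avoids them, and genericity plus the convexity of $R_f$ and $F_i$ does the rest. I would make this precise with one short lemma: $\interior(R_f)\cap\interior(F_i)\ne\emptyset$ iff $p_i\cap\interior(F_i)\ne\emptyset$ iff $p_i$ contributes an arc, each "iff" using only $R_f=\{|\overline{\cdot f}|\le d(\cdot,\delta P)\}$, the Voronoi property $d(\cdot,\delta P)=d(\cdot,\ell_i)$ on $F_i$, and convexity.
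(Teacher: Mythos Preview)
Your approach is essentially the paper's: both directions hinge on the Voronoi property of the straight-skeleton face, namely that for $x\in F_i$ one has $\min_j d(x,\ell_j)=d(x,\ell_i)$, so a boundary point of $R_f$ inside $F_i$ must lie on $p_i$, and conversely a point on the $p_i$-arc must lie in $F_i$. The paper's write-up is considerably shorter: for the ``if'' direction it just picks any point of $\partial R_f$ inside $F_i$ and reads off that it lies on $p_i$; for the ``only if'' direction it argues by contradiction that a point of the $p_i$-arc lying strictly in another face $F_j$ would satisfy $|\overline{uf}|=|\overline{uu_i}|>|\overline{uu_j}|$, violating $u\in H(p_j)$. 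Your explicit handling of skeleton-edge degeneracies via genericity is more careful than the paper on that point, but the detour through the ray-from-$f$ argument is unnecessary and you rightly abandon it.

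One small correction in your final paragraph: the reason $F_i$ contains a point outside $H(p_i)$ is not that ``$R_f$ is bounded'' (a point outside $R_f$ need not be outside $H(p_i)$), but simply that points of $F_i$ sufficiently close to the edge $e_i$ have $d(\cdot,\ell_i)$ near $0$ while $|\overline{\cdot f}|$ is bounded below, since $f$ is strictly interior to $P$. Also, the command \verb|\closure| is not defined in this paper.
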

\begin{proof}

$(\Rightarrow)$ Assume that $\interior(R_f)$ intersects $F_i$.  
Since $F_i$ is bounded by a polygon edge and $R_f$ is contained in $P$, 
we know that there are parts of the boundary of $R_f$ that intersect $F_i$.
Consider any point $u$ on the boundary of $R_f$ inside $F_i$.
Clearly $u$ must be on an arc of some parabola, i.e.,
we have that 
\[
\min_j |\overline{uu_j}| = |\overline{uf}|,
\]
where $u_j$ is an orthogonal projection of $u$ onto $e_j$.
Since all points in $F_i$ is closer to $e_i$ than other edges,
the minimizer of the above term is $u_i$; thus,
$u$ must also lies on $p_i$, i.e., $p_i$ is part of arcs of $R_f$.

$(\Leftarrow)$  We prove by contradiction.  
Assume that $\interior(R_f)$ does not intersect $F_i$, but $p_i$ is part of the boundary of $R_f$.
Consider any point $u\in p_i$ on the boundary.  
Since $\interior(R_f)$ is disjoint from $F_i$, $u$ is strictly in some face $F_j$.
In this case we have that
\[
|\overline{uf}| = |\overline{uu_i}| > |\overline{uu_j}|,
\]
where $u_i$ and $u_j$ are orthogonal projections of $u$ onto $e_i$ and $e_j$,
implying that $u\not\in H(p_j)$, a contradiction.
\end{proof}

\begin{lemma}
\label{lemma:crossing-point-generalized}
For $C \in \mathcal{C}$ with event point $c$,
a safe region $R_f$ strictly contains an event point $c$, i.e., $c\in \interior(R_f)$,
iff $f \in \interior(C)$.
\end{lemma}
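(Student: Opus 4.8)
The plan is to relate containment of the event point $c$ in the safe region $R_f$ to the defining inequality of each half-space $H(p_i)$, and then to exploit the fact that $c$ is the center of an inscribed circle $C$ tangent to exactly the edges adjacent to the faces meeting at $c$. Recall $c\in\interior(R_f)$ iff $c\in\interior(H(p_i))$ for every edge $e_i$, which by the definition of $H(p_i)$ means $|\overline{cf}| < |\overline{cc_i}|$ for every $i$, where $c_i$ is the orthogonal projection of $c$ onto the line through $e_i$. So the whole statement reduces to showing that the single inequality $|\overline{cf}| < r$ (where $r$ is the radius of $C$, i.e.\ $f\in\interior(C)$) is equivalent to the family of inequalities $|\overline{cf}| < |\overline{cc_i}|$ for all $i$.

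First I would establish the ``controlling edge'' fact: among all the quantities $|\overline{cc_i}|$, the minimum is exactly $r$, the distance from $c$ to $\delta P$, and it is achieved precisely by the edges $e_i$ tangent to $C$ (equivalently, by Lemma~\ref{lemma:adjacent-face-tangent-edge}, by the edges whose faces $F_i$ are adjacent to $c$). This is where I expect the only real subtlety: I need that for a \emph{non-adjacent} edge $e_j$, the orthogonal projection $c_j$ onto the \emph{line extension} of $e_j$ still satisfies $|\overline{cc_j}| \ge r$ — not merely that the distance to the segment is $\ge r$. Since $P$ is convex and $c\in P$, the foot of perpendicular from $c$ to the supporting line of any edge lies on the closed edge or outside $P$ on the far side, and in either case the perpendicular distance to the \emph{line} is at least the distance to $\delta P$ which is at least $r$ (with equality iff the line is tangent to $C$, i.e.\ $e_i$ is an adjacent edge). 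Convexity is what makes this clean; I would spell out that for a convex polygon the distance from an interior point to the line through an edge equals the distance to that edge.

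Given that fact, the proof is immediate in both directions. If $f\in\interior(C)$ then $|\overline{cf}| < r = \min_i |\overline{cc_i}| \le |\overline{cc_i}|$ for every $i$, so $c\in\interior(H(p_i))$ for all $i$, hence $c\in\interior(R_f)$. Conversely, if $c\in\interior(R_f)$, then in particular $c\in\interior(H(p_i))$ for an edge $e_i$ tangent to $C$, giving $|\overline{cf}| < |\overline{cc_i}| = r$, i.e.\ $f\in\interior(C)$. I would phrase the write-up so that the equivalence pivots entirely on the identity $\min_i|\overline{cc_i}| = r$, citing Lemma~\ref{lemma:adjacent-face-tangent-edge} for the identification of which edges achieve the minimum, and citing convexity of $P$ (and of each face) for the reduction from distance-to-line to distance-to-edge. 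The main obstacle, to reiterate, is handling projections that fall outside an edge's segment; everything else is bookkeeping with the definition of $H(p_i)$.
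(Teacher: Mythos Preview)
Your proposal is correct and follows essentially the same route as the paper: project $c$ onto each supporting line to get $c_i$, then split into the tangent case $|\overline{cc_i}|=r$ and the non-tangent case $|\overline{cc_i}|>r$, and compare with $|\overline{cf}|$. The paper simply asserts $|\overline{cc_i}|>r$ for non-tangent edges without discussion, whereas you flag and justify this via convexity; a slightly cleaner way to phrase that step is that $C\subseteq P\subseteq H_i$ (the closed half-plane bounded by the line through $e_i$), hence $d(c,\ell_i)\ge r$.
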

\begin{proof}
Let $r$ be radius of event circle $C$.
Project $c$ orthogonally to a line extension of every edge $e_i$, named the projected point $c_i$.

$(\Leftarrow)$ Assume that $f \in \interior(C)$, i.e., $|\overline{cf}|<r$.  We show that $r\in H(p_i)$ for every parabola $p_i$ associated with polygon edge $e_i$.
This is the case when $r$ is strictly closer to $f$ than every other edge $e_i$.
Consider each edge $e_i$ tangent to $C$, we have $|\overline{cf}| < r = |\overline{cc_i}|$.
For edge $e_i$ not tangent to $C$, we have that $|\overline{cc_i}|>r$; thus, $|\overline{cf}| < r < |\overline{cc_i}|$.
Hence, $c$ is in the safe region $R_f$.

$(\Rightarrow)$ Assume that $f \not\in \interior(C)$.  In this case, we have $|\overline{cf}|\geq r$.
Since $C$ is an event circle, there exists edge $e_i$ tangent to $C$.  For that particular edge, we have $|\overline{cc_i}|=r$.
Thus, $|\overline{cf}| \ge r = |\overline{cc_i}|$, and $c$ is not strictly contained in the safe region $R_f$.
\end{proof}

The following theorem gives the number of boundary arcs of $R_f$ as a function of event circles containing $f$.

\begin{theorem}
\label{theorem:count-arcs}
If $f$ is strictly inside some event circle, i.e., $f\in \interior(C)$ for some $C\in{\mathcal{C}}$, then 
\[
|A(R_f)|=
\left|\{ e_i \in E(P) : \mbox{there exists $C \in \mathcal{C}_{e_i}$ s.t. $f \in \interior(C)$}\}\right|
\]
Otherwise, $|A(R_f)|=2$.
\end{theorem}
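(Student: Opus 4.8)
The plan is to connect the three preceding lemmas. Lemma~\ref{lemma:arc-appears-if-intersect-face} says that $p_i$ contributes an arc to $R_f$ exactly when $\interior(R_f)$ meets the skeleton face $F_i$; Lemma~\ref{lemma:crossing-point-generalized} says that an event point $c$ lies in $\interior(R_f)$ exactly when $f$ lies in the corresponding event circle $C$; and Lemma~\ref{lemma:adjacent-face-tangent-edge} says that $c$ is adjacent to $F_i$ exactly when $e_i$ is tangent to $C$. So first I would argue the ``$\supseteq$'' direction of the claimed equality: if $f\in\interior(C)$ for some $C\in\mathcal{C}_{e_i}$ with event point $c$, then $c\in\interior(R_f)$ by Lemma~\ref{lemma:crossing-point-generalized}, and since $c$ is a vertex of $F_i$ (it is adjacent to $F_i$ by Lemma~\ref{lemma:adjacent-face-tangent-edge}), every neighborhood of $c$ inside $R_f$ intersects $F_i$ — so $\interior(R_f)\cap F_i\neq\emptyset$, and Lemma~\ref{lemma:arc-appears-if-intersect-face} gives that $p_i$ is an arc of $R_f$. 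This shows the right-hand set is contained in $A(R_f)$ (identifying each arc with its edge), so $|A(R_f)|\ge$ RHS.

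For the reverse direction I would suppose $p_i$ is an arc of $R_f$ and show that $f$ lies in some event circle tangent to $e_i$. By Lemma~\ref{lemma:arc-appears-if-intersect-face}, $\interior(R_f)$ meets $F_i$; I want to extract an event point $c$ of $S(P)$ that is a vertex of $F_i$ and lies in $\interior(R_f)$, because then Lemma~\ref{lemma:crossing-point-generalized} plus Lemma~\ref{lemma:adjacent-face-tangent-edge} finish it. The natural argument: pick a point $w\in\interior(R_f)\cap F_i$. If $w$ is already close enough to the ``inner'' structure we are done, but in general $w$ need not be an event point. Here is where I expect the main obstacle — one has to move from ``$R_f$ touches the face'' to ``$R_f$ contains a \emph{node} of the skeleton.'' I would exploit convexity of $R_f$ (it is a bounded convex region, as noted after Lemma~\ref{lemma:angle-bisector}) together with the structure of $F_i$: the edges of $F_i$ other than $e_i$ are skeleton edges, and $F_i$ is a convex polygon whose vertices not on $e_i$ are event points. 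The idea is that the boundary of $R_f$ enters $F_i$ through a parabolic arc of $p_i$; following that arc, it must leave $F_i$ across a skeleton edge of $F_i$ (it cannot cross $e_i$ since $R_f\subset\interior P$), and the point where the safe region pushes farthest ``into'' the skeleton is controlled by which event circles contain $f$. More concretely, I would show: if no event circle tangent to $e_i$ contains $f$, then every event point adjacent to $F_i$ is outside $\interior(R_f)$ (by Lemma~\ref{lemma:crossing-point-generalized}), and since $R_f$ is convex and contains no vertex of $F_i$ off $e_i$, $R_f\cap F_i$ is cut off from $e_i$'s side only — but $R_f$ cannot touch $e_i$, so $\interior(R_f)\cap F_i=\emptyset$, contradicting Lemma~\ref{lemma:arc-appears-if-intersect-face}. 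This convexity/separation argument is the delicate part and needs care about degenerate configurations.

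Finally I would handle the ``otherwise'' clause: if $f$ lies in no event circle, then by Lemma~\ref{lemma:crossing-point-generalized} $R_f$ contains no event point of $S(P)$ in its interior, hence $\interior(R_f)$ lies within a single skeleton face (it cannot cross a skeleton edge without either containing a node or — by convexity — still staying within one face region; I'd note $\interior(R_f)$ is connected and any crossing of the one-dimensional skeleton away from nodes still leaves it meeting at most\dots here I would instead argue directly). Cleaner: since $\interior(R_f)$ is convex and meets no skeleton node, and the faces partition $P$, if it met two faces $F_i,F_j$ it would contain a point of a shared skeleton edge; a convex set meeting a skeleton edge at an interior point of that edge while also having area on both sides would, by convexity, contain points arbitrarily near the endpoints of the relevant edge-segment, and pushing this with the fact that every maximal skeleton segment ends at a node or a polygon vertex forces a node inside — contradiction. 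Hence $\interior(R_f)\subseteq F_i$ for exactly one $i$, so by Lemma~\ref{lemma:arc-appears-if-intersect-face} only $p_i$ appears; but a safe region cannot be the region of a single parabola (it is unbounded), so in fact $R_f$ must touch $F_i$ and also be clipped — rechecking, the honest conclusion is that $\interior(R_f)$ meets only one face, yet $|A(R_f)|\ge 2$ always since two parabolas already bound a region; I would then observe that a second face must be met or else reconcile via Lemma~\ref{lemma:one-arc} that at most, combined with the initialization $R_2=H(p_1)\cap H(p_2)$, the count is exactly $2$. I would tighten this last step by arguing $\interior(R_f)$ meets exactly two adjacent faces whose shared node is the unique nearest skeleton node to $f$, which is not contained in $\interior(R_f)$, giving precisely two arcs. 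The main obstacle throughout is the passage between ``safe region touches a face'' and ``safe region contains (or excludes) the face's event node,'' which is a convexity-plus-skeleton-geometry argument rather than a computation.
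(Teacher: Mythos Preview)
Your ``$\supseteq$'' direction is exactly what the paper does: chain Lemmas~\ref{lemma:crossing-point-generalized}, \ref{lemma:adjacent-face-tangent-edge}, and \ref{lemma:arc-appears-if-intersect-face} to show that every edge tangent to an event circle containing $f$ contributes an arc. In fact the paper's proof essentially stops there, asserting that the equality ``follows by taking the union''; it does not argue the reverse inclusion, and for the ``otherwise'' clause it simply cites Lemma~\ref{lemma:angle-bisector} to get two arcs without explaining why there cannot be more. So you are attempting strictly more than the paper writes out, and your instinct that the passage from ``$\interior(R_f)$ meets $F_i$'' to ``$\interior(R_f)$ contains an event node of $F_i$'' is the crux is correct.

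That said, your sketches for that passage do not work as stated. The step ``since $R_f$ is convex and contains no vertex of $F_i$ off $e_i$, $R_f\cap F_i$ is cut off from $e_i$'s side only \ldots\ so $\interior(R_f)\cap F_i=\emptyset$'' is false in general: a convex region can meet the interior of a convex polygon $F_i$ without containing any of its vertices (a small disk inside $F_i$ already shows this). The same flaw reappears in your ``otherwise'' argument, where you claim that a convex set crossing a skeleton edge must, by convexity, contain points arbitrarily near that edge's endpoints; a thin strip crossing a long segment near its middle is a counterexample. Your final paragraph then oscillates between ``$\interior(R_f)$ lies in a single face'' (which would give one arc, impossible) and ``exactly two adjacent faces,'' without an argument that actually rules out three. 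You correctly name the obstacle in your closing sentence, but neither your convexity/separation sketch nor the paper's one-line appeal to Lemma~\ref{lemma:angle-bisector} supplies the missing implication: that whenever $\interior(R_f)$ meets $F_i$ and contains \emph{some} event point, it must contain an event point adjacent to $F_i$; and that when it contains no event point, it meets at most two faces.
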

\begin{proof}
We first assume that $f\in\interior(C)$ for some event circle $C$.
Consider each event circle $C$ with event point $c$ such that $f\in\interior(C)$.
From Lemma~\ref{lemma:crossing-point-generalized}, we know that
$c\in \interior(R_f)$. This also means that $\interior(R_f)$ intersects every face $F_i$ adjacent to event point $c$.
Lemma~\ref{lemma:adjacent-face-tangent-edge} ensures that these faces $F_i$'s correspond with edges $e_i$'s tangent to $C$, the set of edges $e_i$ such that $C\in \mathcal{C}_{e_i}$.
Since Lemma~\ref{lemma:arc-appears-if-intersect-face} ensures that for each face $F_i$ adjacent to edge $e_i$ intersecting with $R_f$, the parabola $p_i$ appears as an arc of $R_f$, we have that for each tangent edge $e_i$ of $C$, its parabola $p_i$ appears as an arc in $R_f$.
The lemma, in this case, follows by taking the union of all boundary edges from every event circle $C\in\mathcal{C}$ that $f$ is strictly inside.

On the other hand, if $f$ is not strictly contained in any event circle $C \in \mathcal{C}$,
Lemma~\ref{lemma:angle-bisector} ensures that the safe region must touches two parabolas.
\end{proof}

Figure~\ref{fig:circles-count-arcs} shows an application of Theorem~\ref{theorem:count-arcs}.

Alternatively, one may view Theorem~\ref{theorem:count-arcs} with the conic section interpretation as follows.
The input polygon $P$ induces $n$ cutting planes, forming the straight skeleton and their corresponding faces when projected onto $xy$-plane, and the point $f$ is represented as a cone whose apex is at $f$.

The structural results in this section give another linear-time algorithm for finding safe regions, by first finding straight skeleton in $O(n)$-time using~\cite{ChinSW99-medial-axis}, then computing event circles, and finally using this information to find the set of edges contributing to the arcs of the safe region.  
However, we believe that the results in this section contribute mainly to the structural understanding of the problem and may serve as a guideline for tackling harder problems, especially the non-convex case of the problem.  
We discuss this in Section~\ref{sect:non-convex-future}.

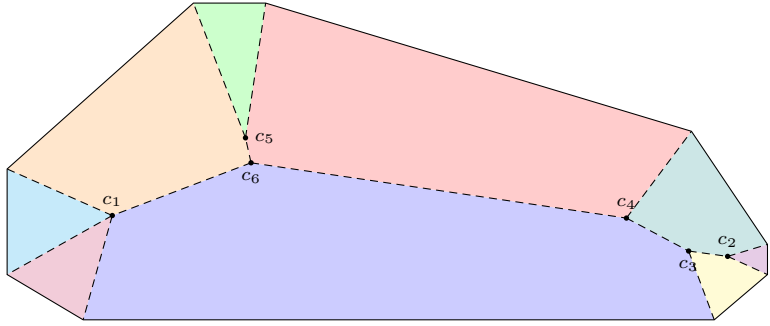
\begin{figure}
  \centering
  \begin{tikzpicture}
    \coordinate (v1) at (0,2);
    \coordinate (v2) at (0,0.6);
    \coordinate (v3) at (1,0);
    \coordinate (v4) at (9.3,0);
    \coordinate (v5) at (10,0.6);
    \coordinate (v6) at (10,1);
    \coordinate (v7) at (9,2.5);
    \coordinate (v8) at (3.4,4.2);
    \coordinate (v9) at (2.45,4.2);
    \coordinate (c1) at (1.383,1.383);
    \coordinate (c2) at (9.476,0.841);
    \coordinate (c3) at (8.963,0.912);
    \coordinate (c4) at (8.145,1.349);
    \coordinate (c5) at (3.135,2.413);
    \coordinate (c6) at (3.208,2.082);
    \fill[cyan!20]   (v1) -- (v2) -- (c1) -- cycle;
    \fill[purple!20] (v2) -- (v3) -- (c1) -- cycle;
    \fill[blue!20]   (v3) -- (v4) -- (c3) -- (c4) -- (c6) -- (c1) -- cycle;
    \fill[yellow!20] (v4) -- (v5) -- (c2) -- (c3) -- cycle;
    \fill[violet!20] (v5) -- (v6) -- (c2) -- cycle;
    \fill[teal!20]   (v6) -- (v7) -- (c4) -- (c3) -- (c2) -- cycle;
    \fill[red!20]    (v7) -- (v8) -- (c5) -- (c6) -- (c4) -- cycle;
    \fill[green!20]  (v8) -- (v9) -- (c5) -- cycle;
    \fill[orange!20] (v9) -- (v1) -- (c1) -- (c6) -- (c5) -- cycle;
    \draw (v1) -- (v2) -- (v3) -- (v4) -- (v5) -- (v6) -- (v7) -- (v8) -- (v9) -- cycle;
    \draw[densely dashed] (c1) -- (v1);
    \draw[densely dashed] (c1) -- (v2);
    \draw[densely dashed] (c1) -- (v3);
    \draw[densely dashed] (c2) -- (v5);
    \draw[densely dashed] (c2) -- (v6);
    \draw[densely dashed] (c3) -- (c2);
    \draw[densely dashed] (c3) -- (v4);
    \draw[densely dashed] (c4) -- (c3);
    \draw[densely dashed] (c4) -- (v7);
    \draw[densely dashed] (c5) -- (v8);
    \draw[densely dashed] (c5) -- (v9);
    \draw[densely dashed] (c6) -- (c1);
    \draw[densely dashed] (c6) -- (c4);
    \draw[densely dashed] (c6) -- (c5);
    \fill (c1) circle (1pt) node[anchor=south] {$\scriptstyle c_1$};
    \fill (c2) circle (1pt) node[anchor=south] {$\scriptstyle c_2$};
    \fill (c3) circle (1pt) node[anchor=north] {$\scriptstyle c_3$};
    \fill (c4) circle (1pt) node[anchor=south] {$\scriptstyle c_4$};
    \fill (c5) circle (1pt) node[anchor=west]  {$\scriptstyle c_5$};
    \fill (c6) circle (1pt) node[anchor=north] {$\scriptstyle c_6$};
  \end{tikzpicture}
  \caption{Skeleton of polygon $P$, each face colored differently}
  \label{fig:skeleton}
\end{figure}

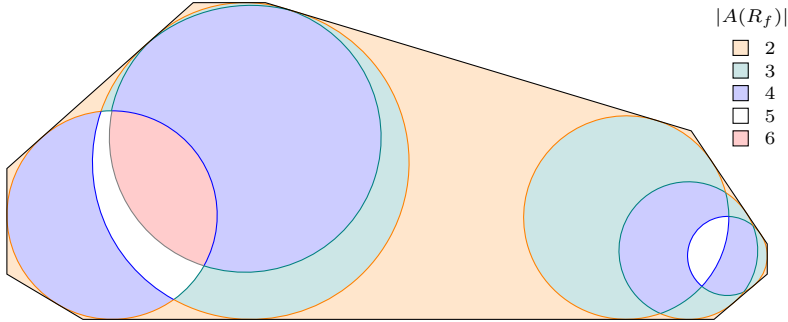
\begin{figure}
  \centering
  \begin{tikzpicture}
    \coordinate (v1) at (0,2);
    \coordinate (v2) at (0,0.6);
    \coordinate (v3) at (1,0);
    \coordinate (v4) at (9.3,0);
    \coordinate (v5) at (10,0.6);
    \coordinate (v6) at (10,1);
    \coordinate (v7) at (9,2.5);
    \coordinate (v8) at (3.4,4.2);
    \coordinate (v9) at (2.45,4.2);
    \coordinate (c1) at (1.383,1.383);
    \coordinate (c2) at (9.476,0.841);
    \coordinate (c3) at (8.963,0.912);
    \coordinate (c4) at (8.145,1.349);
    \coordinate (c5) at (3.135,2.413);
    \coordinate (c6) at (3.208,2.082);
    \draw (9.8,4.0) node {$\scriptstyle |A(R_f)|$};
    \draw[fill=orange!20] (9.55,3.5) rectangle ++(0.2,0.2) node[pos=0.5, anchor=west] {$\;\;\scriptstyle 2$};
    \draw[fill=teal!20]   (9.55,3.2) rectangle ++(0.2,0.2) node[pos=0.5, anchor=west] {$\;\;\scriptstyle 3$};
    \draw[fill=blue!20]   (9.55,2.9) rectangle ++(0.2,0.2) node[pos=0.5, anchor=west] {$\;\;\scriptstyle 4$};
    \draw[fill=white!20]  (9.55,2.6) rectangle ++(0.2,0.2) node[pos=0.5, anchor=west] {$\;\;\scriptstyle 5$};
    \draw[fill=red!20]    (9.55,2.3) rectangle ++(0.2,0.2) node[pos=0.5, anchor=west] {$\;\;\scriptstyle 6$};
    \fill[orange!20] (v1) -- (v2) -- (v3) -- (v4) -- (v5) -- (v6) -- (v7) -- (v8) -- (v9) -- cycle;
    \fill[blue!20]   (c1) circle (1.383);
    \fill[teal!20]   (c5) circle (1.787);
    \fill[teal!20]   (c6) circle (2.082);
    \fill[blue!20]   (3.730,4.098) arc (75.5:129.5:2.082) arc (-225.5:70.5:1.787) -- cycle;
    \fill[white!20]  (1.239,2.759) arc (161.0:240.8:2.082) arc (-54.1:96.0:1.383) -- cycle;
    \fill[red!20]    (1.383,2.766) arc (168.6:252.3:1.787) arc (-29.1:90.0:1.383) -- cycle;
    \fill[teal!20]   (c2) circle (0.524);
    \fill[teal!20]   (c3) circle (0.920);
    \fill[teal!20]   (c4) circle (1.349);
    \fill[blue!20]   (9.81,1.24) arc (50.1:294.2:0.524) arc (-37.0:21.3:0.91) -- cycle;
    \fill[blue!20]   (9.45,1.68) arc (57.4:246.3:0.91) arc (-70.5:14.2:1.35) -- cycle;
    \fill[white!20]  (9.50,1.37) arc (87.9:230.3:0.524) arc (-42.5:0.7:1.35) -- cycle;
    \draw[teal]   (1.383,2.766) arc (90.0:96.0:1.383);
    \draw[orange] (1.239,2.759) arc (96.0:360-54.1:1.383);
    \draw[teal]   (2.195,0.263) arc (-54.1:-29.1:1.383);
    \draw[blue]   (2.591,0.710) arc (-29.1:90.0:1.383);
    \draw[orange] (9.691,0.363) arc (294.2:360+50.1:0.524);
    \draw[teal]   (9.812,1.243) arc (50.1:87.9:0.524);
    \draw[blue]   (9.495,1.365) arc (87.9:230.2:0.524);
    \draw[teal]   (9.140,0.438) arc (230.2:294.2:0.524);
    \draw[orange] (9.812,1.243) arc (21.3:57.4:0.912);
    \draw[teal]   (9.453,1.681) arc (57.4:246.2:0.912);
    \draw[orange] (8.595,0.077) arc (246.2:323.0:0.912);
    \draw[teal]   (9.691,0.363) arc (323.0:360+21.3:0.912);
    \draw[orange] (9.453,1.681) arc (14.2:289.5:1.349);
    \draw[teal]   (8.595,0.077) arc (289.5:317.5:1.349);
    \draw[blue]   (9.140,0.438) arc (317.5:360+0.7:1.349);
    \draw[teal]   (9.495,1.365) arc (0.7:14.2:1.349);
    \draw[orange] (3.730,4.098) arc (70.5:134.5:1.787);
    \draw[teal]   (1.882,3.688) arc (134.1:168.6:1.787);
    \draw[gray]   (1.383,2.766) arc (168.6:252.3:1.787);
    \draw[teal]   (2.591,0.710) arc (252.3:360+70.5:1.787);
    \draw[teal]   (3.730,4.098) arc (75.5:129.5:2.082);
    \draw[orange] (1.882,3.688) arc (129.5:161.0:2.082);
    \draw[blue]   (1.239,2.759) arc (161.0:240.9:2.082);
    \draw[orange] (2.195,0.263) arc (240.9:360+75.5:2.082);
    \draw (v1) -- (v2) -- (v3) -- (v4) -- (v5) -- (v6) -- (v7) -- (v8) -- (v9) -- cycle;
  \end{tikzpicture}
  \caption{Event circles from straight skeleton determine $|A(R_f)|$ (same polygon as Figure~\ref{fig:skeleton})}
  \label{fig:circles-count-arcs}
\end{figure}

\section{Conclusions and open problems}

We give a linear-time algorithm for finding a safe region for folding each point on the boundary of a convex polygon $P$ to point $f\in P$.  
We also give structural properties related to the number of arcs in a safe region for each focal point $f$, based on straight skeletons.
We note the crucial roles straight skeletons in our problem as well as other problems in origami design, as can be seen in~\cite{lang1996computational}.  
An interesting direction for future work is to investigate problems with the similar structures while using straight skeletons as keys.

As mentioned in the introduction, we also hope that our results show interesting connections between the two problems posted by Akitaya {\em et al.}~\cite{AkitayaBDH021}.

\subsection{Remarks on non-convex polygons}
\label{sect:non-convex-future}
Results in Section~\ref{sect:skeleton} shed some light to non-convex cases.  
However, there are issues with the current approach.  
When dealing with non-convex polygons, there are two related concepts: straight skeletons~\cite{AicholzerA-cocoon96-skeletons} and medial axes~\cite{Blum_1967_6755} (see also~\cite{ChinSW99-medial-axis, Attali2009}).  
For a given polygon, a medial axis contains points with equal distance to more than one points on $\delta P$,
while a straight skeleton is a Voronoi diagram where each site is a {\em line extension} of each edge. 
They are the same in convex polygons, however, in non-convex polygons, their medial axes contains curved segments.

In our application, since we can only fold every point on each polygon edge, but not points on the line extension of the edge, it make sense to consider a medial axis.  
However, each face in the medial axis can be bounded by more than one polygon edges, breaking down one of our key assumptions.
We leave the investigation of this approach to non-convex polygons as an important open question.

\section{Acknowledgements}

The authors would like to thank organizers and participants of JCDCGGG
2022, especially Hugo Akitaya, for giving helpful feedbacks on the
presentation of this work and pointing out important issues related to
the problem.

\bibliography{folding}

\begin{thebibliography}{10}

\bibitem{AbelDDELU-flat-folding}
Zachary Abel, Erik~D. Demaine, Martin~L. Demaine, David Eppstein, Anna Lubiw,
  and Ryuhei Uehara.
\newblock Flat foldings of plane graphs with prescribed angles and edge
  lengths.
\newblock {\em Journal of Computational Geometry}, 9(1):74--93, 2018.

\bibitem{AicholzerA-cocoon96-skeletons}
Oswin Aichholzer and Franz Aurenhammer.
\newblock Straight skeletons for general polygonal figures in the plane.
\newblock In Jin-Yi Cai and Chak~Kuen Wong, editors, {\em Computing and
  Combinatorics}, pages 117--126, Berlin, Heidelberg, 1996. Springer Berlin
  Heidelberg.

\bibitem{AkitayaBDH021}
Hugo~A. Akitaya, Brad Ballinger, Erik~D. Demaine, Thomas~C. Hull, and
  Christiane Schmidt.
\newblock Folding points to a point and lines to a line.
\newblock In {\em Proceedings of the 33rd Canadian Conference on Computational
  Geometry, {CCCG} 2021, August 10-12, 2021, Dalhousie University, Halifax,
  Nova Scotia, Canada}, pages 271--278, 2021.

\bibitem{AkitayaDK17-simple-folding-hard}
Hugo~A. Akitaya, Erik~D. Demaine, and Jason~S. Ku.
\newblock Simple folding is really hard.
\newblock {\em J. Inf. Process.}, 25:580--589, 2017.

\bibitem{Attali2009}
Dominique Attali, Jean-Daniel Boissonnat, and Herbert Edelsbrunner.
\newblock {\em Stability and Computation of Medial Axes - a State-of-the-Art
  Report}, pages 109--125.
\newblock Springer Berlin Heidelberg, Berlin, Heidelberg, 2009.

\bibitem{voronoi-book-2013}
Franz Aurenhammer, Rolf Klein, and Der-Tsai Lee.
\newblock {\em Voronoi Diagrams and Delaunay Triangulations}.
\newblock World Scientific, Singapore, 2013.

\bibitem{Blum_1967_6755}
H.~Blum.
\newblock A transformation for extracting new descriptors of shape.
\newblock In W.~Wathen-Dunn, editor, {\em Models for Perception of Speech and
  Visual Form}. MIT Press, Cambridge, MA, 1967.

\bibitem{ChinSW99-medial-axis}
F.~Chin, J.~Snoeyink, and C.~A. Wang.
\newblock Finding the medial axis of a simple polygon in linear time.
\newblock {\em Discrete \& Computational Geometry}, 21(3):405--420, 1999.

\bibitem{Dambrogio-unlocking-21}
Jana Dambrogio, Amanda Ghassaei, {Daniel Starza} Smith, Holly Jackson, {Martin
  L.} Demaine, Graham Davis, David Mills, Rebekah Ahrendt, Nadine Akkerman,
  David {van der Linden}, and {Erik D.} Demaine.
\newblock Unlocking history through automated virtual unfolding of sealed
  documents imaged by x-ray microtomography.
\newblock {\em Nature Communications}, 12, March 2021.

\bibitem{DO08book}
Erik~D. Demaine and Joseph O'Rourke.
\newblock {\em Geometric Folding Algorithms: Linkages, Origami, Polyhedra}.
\newblock Cambridge University Press, USA, reprint edition, 2008.

\bibitem{FeltonTDRW-2014-self-folding}
S.~Felton, M.~Tolley, E.~Demaine, D.~Rus, and R.~Wood.
\newblock A method for building self-folding machines.
\newblock {\em Science}, 345(6197):644--646, 2014.

\bibitem{fortune1986sweepline}
Steven Fortune.
\newblock A sweepline algorithm for voronoi diagrams.
\newblock In {\em Proceedings of the second annual symposium on Computational
  geometry}, pages 313--322, 1986.

\bibitem{GRAHAM1972132}
R.L. Graham.
\newblock An efficient algorith for determining the convex hull of a finite
  planar set.
\newblock {\em Information Processing Letters}, 1(4):132--133, 1972.

\bibitem{haga1994proposal}
Kazuo Haga.
\newblock Proposal of a term origamics for plastic origami-workless scientific
  origami.
\newblock In {\em Second International Meeting of Origami Science and
  Scientific Origami, ABSTRACT A-3}, pages 29--32, 1994.

\bibitem{haga2008origamics}
Kazuo Haga.
\newblock {\em Origamics: mathematical explorations through paper folding}.
\newblock World Scientific, 2008.

\bibitem{hull-project-origami}
T.~Hull.
\newblock {\em Project Origami: Activities for Exploring Mathematics, Second
  Edition}.
\newblock A K Peters/CRC Press, 2013.

\bibitem{hull2011solving}
Thomas~C Hull.
\newblock Solving cubics with creases: the work of beloch and lill.
\newblock {\em The American Mathematical Monthly}, 118(4):307--315, 2011.

\bibitem{hull2020origametry}
Thomas~C Hull.
\newblock {\em Origametry: Mathematical Methods in Paper Folding}.
\newblock Cambridge University Press, 2020.

\bibitem{lang1996computational}
Robert~J Lang.
\newblock A computational algorithm for origami design.
\newblock In {\em Proceedings of the twelfth annual symposium on Computational
  geometry}, pages 98--105, 1996.

\bibitem{Lang-comp-origami-2009}
Robert~J. Lang.
\newblock Computational origami: From flapping birds to space telescopes.
\newblock In {\em Proceedings of the Twenty-Fifth Annual Symposium on
  Computational Geometry}, SCG '09, page 159–162, New York, NY, USA, 2009.
  Association for Computing Machinery.

\end{thebibliography}
\bibliographystyle{plain}

\end{document}